\documentclass[a4paper,12pt]{article}
\usepackage{lmodern} 
\usepackage{amssymb,amsmath,mathrsfs}
\usepackage[dvipsnames]{xcolor}
\usepackage{hyperref}
\usepackage{enumerate}
\usepackage{array}
\usepackage{caption}
\usepackage{natbib}
\usepackage{microtype}

\usepackage[left=.8in,right=.8in,top=.8in,bottom=.8in]{geometry}

\usepackage{sectsty}
\allsectionsfont{\sffamily\mdseries\upshape}

\makeatletter
\renewenvironment{abstract}{%
    \if@twocolumn
      \section*{\abstractname}%
    \else
      \begin{center}%
        {\bfseries\sffamily\abstractname\vspace{\z@}}%
      \end{center}%
      \quotation
    \fi}
    {\if@twocolumn\else\endquotation\fi}
\makeatother

\numberwithin{equation}{section}

\hypersetup{
	colorlinks=true,
	linkcolor=MidnightBlue,
	citecolor=BrickRed,
	urlcolor=MidnightBlue
}

\newcommand{\be}{\begin{equation}}
\newcommand{\ee}{\end{equation}}

\renewcommand{\bar}{\overline}

\renewcommand{\d}{{\mathrm{d}}}

\newtheorem{prop}{Proposition}

\begin{document}

\title{The geometry-first formulation of gauge theory is not equivalent to the symmetry-first one}
\author{Henrique Gomes}

\maketitle

\begin{abstract}
This paper argues that the geometry-first and symmetry-first formulations of gauge theory are not equivalent. They differ in three respects. First, the geometry-first formulation---in which gauge groups arise as automorphism groups of structured fundamental vector bundles---admits fewer theories when its generating structures are restricted to finite tensorial data. Charged theories with additive structure group $\mathbb{R}$ have no such presentation. Second, even when a symmetry-first theory has a geometry-first presentation, its principal bundle and matter bundles do not determine which structured vector bundles generated the gauge group. Third, the natural functor from geometry-first generating objects to principal bundles with matter is faithful but neither essentially surjective nor full. Essential surjectivity fails for the diagonal quotient of the Standard Model gauge group on what I call `the classical menu', and for the additive-$\mathbb{R}$ examples on every finite tensorial menu. Fullness fails for a real oriented fibre $\mathbb{R}^{2m}$: the principal-bundle category admits the outer automorphism of $SO(2m)$ induced by an improper orthogonal map, but no structure-preserving fibre map induces it.
\end{abstract}

\section{Introduction}\label{sec:intro}
The Standard Model is formulated as a gauge theory. This paper compares two formulations of gauge theory and asks whether they determine the same theoretical structure.

The standard formulation of gauge theory begins with a structure group $G$ acting on a principal fibre bundle and constructs matter fields as sections of associated vector bundles (see \citealp{Bleecker}; \citealp{KobayashiNomizu1}; for philosophical discussion, \citealp{Weatherall2016}). A companion paper \citep{Gomes_particles}, building on \citet{Gomes_internal, Gomes_AB}, develops an alternative \emph{geometry-first} formulation. Its primitive objects are structured vector bundles equipped with compatible covariant derivatives; gauge groups arise as automorphism groups of fibres rather than being posited independently. That paper shows that much of the explanatory work usually attributed to symmetry can be done by geometry. The Higgs mechanism can be described using the extrinsic curvature of a sub-bundle singled out by the Higgs vacuum; Yukawa couplings become fibrewise contractions determined by inner-product and orientation structures on the fundamental bundles; and charge quantisation follows from the discrete algebraic structure of tensor powers of a line bundle, without appeal to the compactness of $U(1)$.

These results would still matter if the geometry-first formulation were only a notational variant of the symmetry-first one. The equivalence question determines whether the formulation changes only notation, or instead changes the primitive objects and explanatory order.

This paper addresses the equivalence question, which \citet{Gomes_particles} treated only in passing. I argue that the formulations differ in theory-space, in what can be recovered from the matter bundles, and in their categorical structure. The inclusion of theory-spaces is strict, although the direct witnesses are non-compact; for compact groups the issue is explanatory cost rather than reach. Matter bundles do not determine the geometry-first starting point: a natural recovery procedure may yield a structure group that no \emph{independent} collection of fundamental bundles generates. Finally, the canonical functor from generating objects to principal bundles with matter is faithful but neither essentially surjective nor full, under the symmetry-first picture's own standard of sameness (Section~\ref{sec:functor}).

Section~\ref{sec:setup} fixes notation and states the two formulations. Section~\ref{sec:strict} exhibits the strict inclusion. Section~\ref{sec:recovery} shows that matter bundles do not recover the geometry-first starting point. Section~\ref{sec:functor} reformulates the comparison categorically, in the framework defended by \citet{WeatherallNG2016} and \citet{BarrettHalvorson2016}, and shows that a natural functor from geometry-first to symmetry-first objects fails to be an equivalence even when the codomain is restricted to theories with compact gauge group acting faithfully on the total matter content. Section~\ref{sec:conclusion} concludes.

\section{Setup}\label{sec:setup}

This section recalls only the definitions that will be needed. Throughout, $M$ is a smooth connected manifold (spacetime), and all bundles are smooth. `PFB-POV' abbreviates `principal fibre bundle point of view' and `VB-POV' abbreviates `vector bundle point of view'. Throughout, $G$ denotes the finite-dimensional structure group---the internal gauge group---not the infinite-dimensional group of vertical bundle automorphisms; where the latter is meant, the term `gauge transformations' is used.

\paragraph{Symmetry-first (PFB-POV).} One begins with a principal $G$-bundle $(P,M,G)$ and a principal connection $\varpi$. Here $P$ is a smooth manifold carrying a free and proper right action $P\times G\to P$, $(p,g)\mapsto pg$. Matter fields are sections of associated vector bundles $\Phi_i = P\times_{\rho_i} W_i$, where $\rho_i: G\to GL(W_i)$ are representations on the vector spaces $W_i$ and $P\times_{\rho_i} W_i:=(P\times W_i)/{\sim}$ is the quotient by the equivalence
\be  \quad (p', w')\sim (p, w) \in P\times W_i \quad \text{iff}\quad (p', w')=(pg,\rho_i(g^{-1})w) \quad \text{for some}\quad g\in G,
\ee  
with $[p,w]$ denoting the class of $(p,w)$. Call these associated vector bundles \emph{matter bundles}.
The connection $\varpi$ is an equivariant $\mathfrak{g}$-valued 1-form on $P$ ($\mathfrak{g}:=\text{Lie}(G)$). It induces covariant derivatives on all matter bundles $\Phi_i$ simultaneously; the precise mechanism will not be needed here (see e.g.\ \cite{Bleecker, KobayashiNomizu1, Gomes_elements}).

\paragraph{Geometry-first (VB-POV).} One begins with a finite collection of \emph{fundamental} vector bundles $(E^a, M, V_a)$, $a=1,\ldots,N$, each equipped with additional fibre structure $\mathrm{struct}_a$ drawn from a fixed \emph{menu} of classical linear-algebraic structures: real (positive-definite, or non-degenerate of fixed signature) or Hermitian inner products $\langle\cdot,\cdot\rangle_a$, complex volume forms $\epsilon_a$, symplectic forms. In general a menu entry for a fibre $V$ is a finite list $T$ of tensors built from $V$ and $V^*$, and the associated group is the full stabiliser
\be\label{eq:menu}
\mathsf{Aut}(V,T)\;=\;\{A\in GL(V)\;:\;A\cdot t=t \ \text{ for every } t\in T\},
\ee
cut out of $GL(V)$ by polynomial equations---a real-algebraic group, with finitely many connected components. The group is the full stabiliser; the construction does not pass to identity components or impose further non-tensorial restrictions. The menu entries are fixed, smooth, non-dynamical fibre structures, analogous to fixed spacetime background structures such as a metric or foliation. The amount of such background will matter below: enlarging the menu enlarges the class of presentable gauge groups (Sections~\ref{sec:kernel} and \ref{sec:functor}). Each bundle also carries a compatible covariant derivative $\nabla^a$.\footnote{Each $\nabla^a$ determines (and is determined by) a principal connection on the bundle of admissible frames $L(E^a)$---the bundle of frames preserving the fibre structure. For instance, for a Hermitian bundle with inner product $\langle\cdot,\cdot\rangle$ and volume form $\epsilon$, an admissible frame is orthonormal and orientation-preserving, and the structure group of $L(E^a)$ is $\mathsf{Aut}(V_a, \mathrm{struct}_a)$---e.g.\ $SU(n)$ for an oriented Hermitian bundle with fibre $\mathbb{C}^n$. This correspondence relates the VB-POV to the PFB-POV; see \citet{Bleecker}, \citet{KobayashiNomizu1}, or \citet{Gomes_elements}.} The gauge group is not posited; it arises as a product of automorphism groups:
\be\label{eq:G_VB}
G = \prod_a \mathsf{Aut}(V_a, \mathrm{struct}_a).
\ee
All matter bundles are constructed from the fundamental ones by tensor products, duals, exterior powers, and direct sums, together with the sub-bundles the structures cut out canonically---images and kernels of structure-built maps, e.g.\ the traceless part of $V\otimes V^*$: the $\Phi_i$ where matter fields live must be obtained via such constructions. A field of a given particle species is a section of the bundle with fibre $W_i$, and its covariant derivative is inherited from the $\nabla^a$ by the Leibniz rule.

Note that a fundamental bundle need not contribute to any matter bundle: a gauge sector with no coupled matter---vacuum Yang--Mills---is perfectly legitimate.

\paragraph{Representational gap.} The two formulations impose different constraints on the relationship between gauge group and matter geometry. This is easiest to illustrate when there is a single fundamental fibre and a single matter bundle, $V_a=W_i=V$. In the PFB-POV, a matter bundle $\Phi = P\times_\rho V$ requires only that the representation $\rho$ preserve whatever structure $V$ carries:
\be\label{eq:subset}
\rho(G) \subseteq \mathsf{Aut}(V).
\ee
The geometry-first picture demands more. It requires
\be\label{eq:aut}
G \simeq \rho(G) \simeq \mathsf{Aut}(V).
\ee
The two isomorphisms in \eqref{eq:aut} can fail independently: $\rho(G)$ may be a proper subgroup of $\mathsf{Aut}(V)$, and $\rho$ may have a nontrivial kernel so that $G\not\simeq\rho(G)$. The gap between \eqref{eq:subset} and \eqref{eq:aut} is the representational freedom allowed by the PFB-POV and excluded by the VB-POV. In the PFB-POV, the group $G$, the representation $\rho$, and the fibre $V$ can be chosen independently, subject only to mutual consistency. In the VB-POV, the fibre structure fixes both the group and its defining action.

The two isomorphisms in \eqref{eq:aut} play different roles below. The condition $\rho(G)\simeq\mathsf{Aut}(V)$ says that the represented group is the \emph{full} stabiliser of the fibre structure. Three later arguments use this condition. Full tensor stabilisers are real-algebraic, which excludes the charged additive-$\mathbb{R}$ theories of Section~\ref{sec:strict} (Proposition~\ref{prop:alg}). No product of classical-menu stabilisers is isomorphic to the Standard Model's diagonal quotient $G/\mathbb{Z}_6$ (Sections~\ref{sec:kernel} and~\ref{sec:functor}). And a structure-preserving fibre map induces only an inner automorphism of the stabiliser, which matters for Proposition~\ref{prop:fullness}. By contrast, $G\simeq\rho(G)$ says that the action on the fibre is faithful. This holds automatically for each fundamental fibre, since a stabiliser acts faithfully on that fibre, but it need not hold for the selected matter representations. Section~\ref{sec:kernel} uses this distinction: Standard Model matter has a $\mathbb{Z}_6$ kernel, so the matter sector does not decide between $G$ and $G/\mathbb{Z}_6$.

Both formulations describe gauge-theoretic structure. The question is whether they describe the \emph{same} structure: whether one can pass between them without loss. They do not.

\section{The inclusion of theory-spaces is strict}\label{sec:strict}

Some PFB theories have no finite tensorial VB-POV presentation. Exceptional compact groups raise a different issue: they are presentable, but their presentations can use tensors that encode the group structure directly. They therefore concern explanatory cost rather than theory-space inclusion.

\paragraph{Non-integral charges.} Suppose one starts from a single fundamental bundle $(E, M, \mathbb{C}, \langle\cdot,\cdot\rangle)$---a Hermitian line bundle---and builds all charged fields tensorially: $E^{\otimes n}$ for $n\in\mathbb{Z}$. Then charge assignments necessarily form a lattice. The charge of the fundamental bundle is the generator, and all other charges are integer multiples of it. This holds regardless of whether the automorphism group of the fibre is compact ($\mathsf{Aut}(\mathbb{C},\langle\cdot,\cdot\rangle) = U(1)$) or non-compact ($\mathsf{Aut}(\mathbb{C}) = \mathbb{C}^\times$): what enforces discreteness is the tensorial construction, not the topology of the group.

In the PFB-POV with compact $U(1)$, charges are also quantised---but for a different reason: the continuous homomorphisms $U(1)\to GL(1,\mathbb{C})$ are $e^{i\theta}\mapsto e^{in\theta}$ with $n\in\mathbb{Z}$, and periodicity enforces integrality. Yet the PFB-POV does not \emph{require} compactness. If one takes the additive structure group $\mathbb{R}$, the smooth one-dimensional representations $\rho_\alpha(t) = e^{i\alpha t}$ are available for every $\alpha\in\mathbb{R}$, and one may postulate matter fields with irrational charge ratios. Among connected one-dimensional Lie groups, the alternatives are $U(1)$ and additive $\mathbb{R}$. Since the smooth characters of $U(1)$ are integral, only $\mathbb{R}$ permits arbitrary charge ratios.
Such a theory has no compact $U(1)$ replacement: tensor powers and duals of a line bundle produce an integral charge lattice, whereas charges $1$ and $\alpha\notin\mathbb{Q}$ are incommensurate.  In Proposition~\ref{prop:alg} I give the stronger statement needed below: no finite tensorial menu produces the additive $\mathbb{R}$ theory with charged matter.

Note also that prior to a choice of global group and bundle sectors the local Lagrangian density of a gauge theory sees only the Lie algebra of the gauge group: the gauge field $A_\mu$ is $\mathfrak{g}$-valued, and the covariant derivatives, field strengths, and coupling terms are built from Lie brackets and representation matrices. Since $\mathfrak{u}(1)\cong\mathbb{R}$ as Lie algebras, the Standard Model Lagrangian does not by itself distinguish $U(1)$ from $\mathbb{R}$ as the hypercharge gauge group. The specific hypercharge assignments ($1/6$, $2/3$, $-1/3$, etc.) are commensurate, so they are \emph{compatible} with the compact choice $U(1)$ but they are equally valid as representations of $\mathbb{R}$. What distinguishes the two choices is global: $U(1)$ bundles admit nontrivial topology (monopoles, topological sectors), while $\mathbb{R}$ bundles are always trivializable. The VB-POV resolves this ambiguity by construction. Moreover, the resolution is empirically commited, not merely definitional. That is, a symmetry-first theorist with gauge group $\mathbb{R}$ could absorb a newly discovered charge incommensurate with the known ones by assigning one more character; nothing local in her theory would change, for the Lagrangian depends only on the Lie algebra, which is the same. The same discovery would falsify every finite tensorial generating structure at once (Proposition~\ref{prop:alg}). The proposition thus does empirical as well as categorical work: it fixes which future charge spectra the geometry-first formulation can accommodate. The same local/global distinction appears in Section~\ref{sec:kernel}, where the Lagrangian likewise fails to distinguish $G$ from its discrete quotient $G/K$.

 It is important to note that non-compact structure groups---$SL(n,\mathbb{R})$, the Lorentz group---also arise in physical contexts, and the VB-POV admits them for the appropriate fibre structures: $\mathsf{Aut}(\mathbb{C})=\mathbb{C}^\times$, $\mathsf{Aut}(\mathbb{R}^n)=GL(n,\mathbb{R})$, and so on. Thus the additional representations allowed by the PFB-POV are not a merely formal possibility.

Of course, in the Standard Model, the internal gauge group is compact and this representational gap closes. If $G$ is compact and acts faithfully on a finite-dimensional $V$, every finite-dimensional continuous irreducible representation of $G$ is isomorphic to a subrepresentation of some $V^{\otimes r}\otimes(V^*)^{\otimes s}$, where $r$ and $s$ range over the non-negative integers. So every irreducible occurs inside some iterated tensor product of copies of $V$ and its dual.\footnote{To see this, note that products of matrix entries of a faithful representation---the functions $g\mapsto\langle w,\rho(g)v\rangle$---and their conjugates are matrix entries of the tensor powers $V^{\otimes r}\otimes(V^*)^{\otimes s}$. By the Peter--Weyl theorem, these products span a dense subspace of the continuous functions on $G$. If an irreducible representation were absent from every tensor power, its matrix entries would be orthogonal to that dense subspace, hence zero. See \citet{BrockerDieck}, Ch.~III.
Strictly, an irreducible appears as a \emph{sub}representation of a tensor power; this is why the tensorial grammar of Section~\ref{sec:setup} includes the canonically cut-out sub-bundles. The Standard Model multiplets are full tensor products of fundamentals, so the argument below does not depend on this refinement.} In the VB-POV, $G = \prod_a\mathsf{Aut}(V_a, \mathrm{struct}_a)$ acts faithfully on $\bigoplus_a V_a$ by construction, so the tensor operations of Section~\ref{sec:setup} already generate the full representation theory of $G$. The VB-POV is thus not restrictive about which matter representations can appear in this case. The obstructions of Sections~\ref{sec:recovery} and \ref{sec:functor} are of a different kind: they persist even for compact groups with faithful representations.

\paragraph{Exceptional groups.} For the classical families ($SU(n)$, $SO(n)$, $Sp(n)$), the VB-POV uses familiar fibre structures---inner products, symplectic forms, and volume forms---whose full stabilisers are the corresponding gauge groups. And though the exceptional families are also presentable, as we will see, their presentations have a different explanatory cost. In the $E_8$ case the minimal faithful representation is the adjoint, of dimension 248. Take the fibre $V=\mathfrak{e}_8$, the Lie algebra of the compact group $E_8$, with its bracket $[\cdot,\cdot]$ and the negative Killing form $-\kappa(X,Y)=-\mathrm{Tr}\bigl(Z\mapsto[X,[Y,Z]]\bigr)$. Then $\mathsf{Aut}(V,[\cdot,\cdot],\kappa)=E_8$.
This is a geometry-first presentation, but it is less explanatory in the intended sense: the primitive fibre is the gauge algebra, and the tensor that recovers the group is the Lie bracket. So, in this weaker sense, presentability is not at issue for compact groups. By Chevalley's theorem every compact group, including the exceptional groups, is the automorphism group of a faithful fibre carrying suitable tensors \citep{DeligneMilne1982}. The exceptional cases therefore concern only explanatory economy.

Call a PFB theory \emph{VB-presentable} if its structure group is an automorphism group of structured fibre input---of one or several fundamental bundles---and its matter content is generated tensorially from those bundles. Presentability can be construed as menu-relative so that unqualified, `VB-presentable' just means presentable on \emph{some} finite tensorial menu \eqref{eq:menu}; and `classically VB-presentable' means on the classical menu of Section~\ref{sec:setup}. Then the geometry-first theory-space is strictly contained in the symmetry-first one:
\be\label{eq:strict}
\{\text{VB-presentable theories}\} \subsetneq \{\text{PFB theories with matter bundles}\}.
\ee
The inclusion is strict because of the non-compact charge examples. Restricted to compact gauge groups, unqualified presentability no longer separates the formulations. However, the obstructions of Sections~\ref{sec:recovery} and \ref{sec:functor} are  not about which compact groups can be presented. They concern the provenance of a presentation and the morphisms between presentations.

\section{Matter bundles do not recover the geometry-first presentation}\label{sec:recovery}

One might concede \eqref{eq:strict} and still maintain that, for physically interesting theories like the Standard Model, the principal bundle is an idealisation that records gauge-theoretic behaviour already implicit in the matter sector. On this view, the gauge group, the connection, and the associated bundles mathematically represent how matter transforms and couples; the geometry-first construction is an intermediate representation rather than an independent input. If that is correct, one should be able to recover the geometry-first input from the matter bundles alone. The Standard Model is VB-presentable: its gauge group $SU(3)\times SU(2)\times U(1)$ arises from three fundamental bundles with fibres $\mathbb{C}^3$, $\mathbb{C}^2$, and $\mathbb{C}$. The reconstruction question is whether the converse works. Starting from a VB-POV construction---fundamental bundles $E^a$, covariant derivatives $\nabla^a$, and gauge group $G=\prod_a\mathsf{Aut}(V_a)$---then passing to the matter bundles $\Phi_i$, can one recover the same group $G$ and the same product decomposition from the matter bundles alone?

The goal is to recover a given geometry-first input from the matter sector. Section~\ref{sec:functor} formalises the forward direction as a functor $\mathcal{F}:\mathbf{VB}_{\mathrm{gen}}\to\mathbf{PFB}_{\mathrm{mat}}$; the present question is whether that passage has a canonical inverse. The natural recovery strategies fail.

The simple case of Section~\ref{sec:setup} illustrates the idea: when there is a single fundamental bundle with $V_a = W_i = V$, one can reconstruct the gauge group from the matter bundle by forming the bundle of admissible frames $L(\Phi)$. Its structure group is $\mathsf{Aut}(V, \mathrm{struct})$, and the original fundamental bundle is recovered as an associated bundle. The question is whether anything like this works when the gauge group is a product and the matter bundles are composite.

I consider three increasingly generous reconstruction attempts. Each requires information---fundamental fibres, tensorial genealogy, or factor identifications---that the matter sector does not supply. The first two attempts (\S\ref{sec:total}) fail even for VB-presentable theories; the third (\S\ref{sec:kernel}) grants the connection as well and still fails.

\subsection{A single matter bundle does not recover the gauge group}

We begin with two fundamental bundles: $(E^1, M, V_1, \langle\cdot,\cdot\rangle_1, \epsilon_1)$ with fibre $V_1 = \mathbb{C}^3$ equipped with Hermitian inner product and complex volume form, so that $\mathsf{Aut}(V_1) = SU(3)$; and $(E^2, M, V_2, \langle\cdot,\cdot\rangle_2, \epsilon_2)$ with $V_2 = \mathbb{C}^2$, so that $\mathsf{Aut}(V_2) = SU(2)$. The gauge group is $G = SU(3)\times SU(2)$. From these, we then build two matter bundles: $\Phi_1$ with fibre $W_1 = V_1$ (the defining representation of $SU(3)$, with $SU(2)$ acting trivially) and $\Phi_2$ with fibre $W_2 = V_2$ (the defining representation of $SU(2)$, with $SU(3)$ acting trivially).

Now we shall forget the fundamental bundles and try to recover $G$ from any of the single matter bundles alone. From $\Phi_1$ one reads off $\mathsf{Aut}(W_1, \langle\cdot,\cdot\rangle_1, \epsilon_1) = SU(3)$; from $\Phi_2$, $\mathsf{Aut}(W_2, \langle\cdot,\cdot\rangle_2, \epsilon_2) = SU(2)$. Each matter bundle sees only the factor of $G$ that acts on it, thus the product $SU(3)\times SU(2)$ is visible in neither alone. This illustrates the gap described in Section~\ref{sec:setup}: different factors of $G$ act trivially on different matter bundles, so no single one detects the full group. The Standard Model contains many such cases: for the right-handed electron, $SU(3)$ and $SU(2)$ both act trivially.

\subsection{The total matter bundle does not recover the group}\label{sec:total}

The previous obstruction might seem to arise only because the reconstruction considered one matter bundle at a time. But the Standard Model has many particle species, from which we can form the total matter bundle
\[
\Phi_{\mathrm{tot}} = \Phi_1 \oplus \Phi_2 \oplus \cdots \oplus \Phi_k
\]
and ask whether the full matter content recovers the gauge group. There are two natural strategies, but as we will see, both fail.

\emph{First attempt: use the full automorphism group $\mathsf{Aut}(\Phi_{\mathrm{tot}})$.} This group acts on the total fibre $W_1\oplus\cdots\oplus W_k$. It includes not only transformations that act separately on each summand, but also off-diagonal maps that mix distinct particle species---maps with components $W_i\to W_j$ for $i\neq j$; such automorphisms exist globally whatever the topology, e.g.\ $\mathrm{id}+N$ for $N$ any bundle map $\Phi_i\to\Phi_j$. These are perfectly good bundle automorphisms, but they have no counterpart in gauge theory: a gauge transformation does not turn a quark into a lepton. Thus $\mathsf{Aut}(\Phi_{\mathrm{tot}})$ is too large.

\emph{Second attempt: restrict to block-diagonal automorphisms.} Exclude the off-diagonal maps. What remains is
\[
\prod_i \mathsf{Aut}(\Phi_i),
\]
the product of the automorphism groups of the individual matter bundles. This avoids the species-mixing problem but introduces two new ones.

\emph{Invisible tensor structure.} A single composite fibre already poses a problem. The left-handed quark fibre $\mathbb{C}^3\otimes\mathbb{C}^2$ is isomorphic, as a Hermitian space, to $\mathbb{C}^6$, whose automorphism group is $U(6)$. The Standard Model group $SU(3)\times SU(2)\times U(1)$ embeds in $U(6)$, but nothing intrinsic to $\mathbb{C}^6$ singles out the tensor-product decomposition from which the product structure of $G$ derives. To recover that decomposition one needs to know which bundles are fundamental and which are composite.

\emph{Overcounting.} Even assuming that the automorphisms of each composite fibre respect its underlying tensor-product structure, a cross-species problem appears. A single gauge factor typically acts on several particle species. In the Standard Model, $SU(2)$ acts on both the left-handed quark doublet (fibre $\mathbb{C}^3\otimes\mathbb{C}^2$) and the left-handed lepton doublet (fibre $\mathbb{C}^2$). The species-indexed product $\prod_i\mathsf{Aut}(\Phi_i)$ therefore contains an independent copy of $SU(2)$ from each sector---$SU(2)_{\mathrm{quark}}$ and $SU(2)_{\mathrm{lepton}}$---while the gauge group has a single $SU(2)$ acting coherently across both. To identify these copies, one needs to know that both species descend from the same fundamental bundle $E^2$---precisely the input the VB-POV provides.

\subsection{Faithfulness is not enough: the diagonal kernel}\label{sec:kernel}

Now let us allow the reconstruction to employ all the information used in the previous attempts and some more. Suppose the tensorial genealogy of each composite fibre is known, so the tensor-product decompositions are fixed, and suppose moreover that repeated isomorphic factors across species have been correctly identified.\footnote{The second concession is already strong: it would exclude left-right symmetric models with gauge group $SU(2)_L\times SU(2)_R$, trinification models based on $SU(3)^3$, and any theory whose gauge group has isomorphic factors acting on distinct interaction sectors.} Suppose also that the full connection is available, so all parallel-transport information on the matter bundles is known. (Here I pause to note that the VB-POV derives the gauge group from fibre structure without using the covariant derivatives, so this grants the reconstruction more information than the geometry-first formulation requires.) A purely algebraic obstruction nevertheless remains.

The gap described in Section~\ref{sec:setup} has a second horn: the structure group $G$ need not act faithfully on the matter content: the total representation $\rho_{\mathrm{tot}}:G\to GL(\bigoplus_i W_i)$ may have a nontrivial kernel. When this kernel sits diagonally across the factors of $G$, the quotient $G/\ker\rho_{\mathrm{tot}}$ is a legitimate structure group that is not a product. In this case, no amount of local information, including the connection, can distinguish it from $G$ at the level of the matter sector. The natural response is to quotient by that kernel and work with the group that does act faithfully. In the Standard Model, a $\mathbb{Z}_6\subset SU(3)\times SU(2)\times U(1)$ acts trivially on every multiplet that actually occurs.\footnote{The $\mathbb{Z}_6$ kernel arises from the specific hypercharge assignments of the Standard Model particles. The convention used here is that $z\in U(1)$ acts on a field of hypercharge $Y$ as $z^{6Y}$; the Standard Model's hypercharges lie in $\frac{1}{6}\mathbb{Z}$, so all exponents are integers and the action is well defined. Elements of $SU(3)\times SU(2)\times U(1)$ of the form $(e^{2\pi i/3}\mathbf{1}_3,\, -\mathbf{1}_2,\, e^{i\pi/3})$ and their powers then act trivially on every Standard Model field.} The group that acts faithfully is the quotient
\[
G' = (SU(3)\times SU(2)\times U(1))/\mathbb{Z}_6.
\]

If one requires a faithful total action, $G'$ is the natural structure group. The issue is that $G'$ does not reproduce the \emph{independent-bundle} VB-POV presentation. Clearly, that construction produces the product $G = SU(3)\times SU(2)\times U(1)$, with each factor arising as the automorphism group of a separate fundamental fibre. In the quotient $G'$, the $\mathbb{Z}_6$ identification relates the three factors, so it cannot arise from those three independent fibres. 

But one must be careful here: it does not follow that $G'$ has no geometry-first presentation. A single rank-five Hermitian bundle $E$ with a complex volume form and a fixed background splitting $E=E_3\oplus E_2$ has automorphism group $S(U(3)\times U(2))\cong G'$, so $G'$ is produced by one structured bundle rather than three independent ones \citep{GomesWeatherall_GUT}. 
Such a fixed background splitting is a further background field in the sense of Section~\ref{sec:setup}: fixed, non-dynamical fibre structure, posited alongside the inner product and the volume form. Positing background structure to reduce an automorphism group is standard: a fixed foliation or timelike vector field on spacetime reduces general covariance in the same  way. The reconstruction problem is that the matter sector does not recall the fixed background splitting. The same multiplets are compatible with the product $G$, generated by three independent bundles $E_3,E_2,E_1$, and with the quotient $G'$, generated by one bundle with a fixed background splitting. The representations factor through $G'$ in either presentation, and no local information, including the connection, distinguishes them.\footnote{Global structure can distinguish them asymmetrically. On a topologically nontrivial $M$ the $G'$ theory admits matter-bundle sectors---twisted by the $\mathbb{Z}_6$, with fractional fluxes correlated across the factors---that cannot be produced by a product presentation, since a $G'$-bundle lifts to a $G$-bundle only if a class in $H^2(M,\mathbb{Z}_6)$ vanishes. A configuration in such a sector would refute the product provenance outright. But since every product-admissible sector is also $G'$-admissible, the matter sector can refute the independent-bundle origin but cannot confirm it. Nothing below is affected: every object in the image of $\mathcal{F}$ belongs to the untwisted sectors. See \citet{Gomes_density} on why the global form is constituted fibrewise but manifested only globally.} The reconstruction failure is therefore one related to the provenance of the matter fields: the matter fixes the action, but not the global group or the geometry that generates it.

More generally, suppose the VB-POV produces a gauge group $G = \prod_{a=1}^N G_a$, and let $K = \ker(\rho_{\mathrm{tot}}) \subset G$ be the kernel of the total action on matter. If $K$ decomposed as a product $\prod_a K_a$ with $K_a\subset G_a$, then $G/K \simeq \prod_a (G_a/K_a)$ would still be a product, and one could hope to realise each factor as the automorphism group of a modified fibre.\footnote{Even then the hope can fail within a single factor: a kernel $K_a\subset G_a$ is always normal, but it is a direct factor of $G_a$ only if $1\to K_a\to G_a\to G_a/K_a\to 1$ splits. Three cases of the total kernel should be distinguished. If $K$ is a direct factor of $G$, the vacuum-sector treatment of Section~\ref{sec:setup} applies verbatim. If $K$ is continuous but not a direct factor---for compact connected $G$ one still has $\mathfrak{g}=\mathfrak{k}\oplus\mathfrak{h}$ as a direct sum of ideals, and matter couples only to $\mathfrak{h}$---the $\mathfrak{k}$-valued connection components are locally unsourced, but $G$ need not be $K\times G/K$ globally: $U(n)\cong(U(1)\times SU(n))/\mathbb{Z}_n$. If $K$ is discrete, as the $\mathbb{Z}_6$ is, there is no vacuum sector in any dynamical sense: $G$ and $G/K$ have identical local connection forms, and $K$ records only the global form of the group. The second case yields its own essential-surjectivity witness; see footnote~\ref{fn:psu3}.} But generically the kernel sits diagonally, as the Standard Model's $\mathbb{Z}_6$ does, and then $G/K$ is not a product of automorphism groups of independent fibres: producing it geometrically requires a unified bundle carrying more background structure \citep{GomesWeatherall_GUT}. Requiring faithfulness thus does not recover the geometry-first starting point.

\section{Categorical formulation}\label{sec:functor}

Section~\ref{sec:recovery} considered reconstruction from matter bundles to geometry-first input and found no canonical reconstruction. The canonical passage goes in the other direction: from VB-POV generating objects to PFB-POV objects. Categorical equivalence has been proposed as a candidate necessary condition for, or even as constitutive of, theoretical equivalence (see \citealp{WeatherallNG2016}; \citealp{BarrettHalvorson2016}; for a review, \citealp{WeatherallCompass2019a, WeatherallCompass2019b}). The present argument need not take a stance on that debate. The categorical comparison is used here as a diagnostic.

The functor defined below is faithful but neither essentially surjective nor full. The first failure concerns objects: some PFB theories, including the faithful quotient $G'$ of the Standard Model gauge group, have no preimage among classical-menu generating objects. This is the categorical form of the diagonal-kernel obstruction in Section~\ref{sec:kernel}. The quotient $G'$ acquires a preimage once the menu admits a fixed background splitting \citep{GomesWeatherall_GUT}. Fullness fails for a different reason. The PFB category admits morphisms equivariant with respect to nontrivial automorphisms of the structure group, whereas the VB category supplies only morphisms induced by structure-preserving maps of the generating fibres. Proposition~\ref{prop:fullness} gives a real oriented $SO(2m)$ example. After the proposition, I explain the extra morphisms in terms of centralisers and normalisers.

The target category must remember matter bundles, since those are the objects from which reconstruction was supposed to proceed. The comparison therefore uses two categories: geometry-first generating objects and principal bundles with matter.

\paragraph{The source category $\mathbf{VB}_{\mathrm{gen}}$.} An object is a tuple
\[
X = \bigl((E^a, M, V_a, \mathrm{struct}_a, \nabla^a)_{a=1}^N,\; (W_i, \tau_i)_{i=1}^k\bigr),
\]
where each $(E^a, M, V_a, \mathrm{struct}_a)$ is a fundamental vector bundle carrying classical-menu structures (Section~\ref{sec:setup}) and a compatible covariant derivative, and each $W_i$ is a matter fibre equipped with a specified tensorial construction $\tau_i$ expressing $W_i$ in terms of the fundamental fibres:
\[
W_i = \tau_i(V_1,\ldots,V_N).
\]
Here $\tau_i$ is built from tensor products, duals, exterior powers, and direct sums. The construction itself, not merely the resulting vector space $W_i$, is part of the input: it determines how gauge transformations on the fundamental bundles act on $W_i$, and two constructions yielding isomorphic vector spaces may induce inequivalent representations of $G$. A morphism $f:X\to X'$---defined when $X$ and $X'$ have the same number $N$ of fundamental bundles with matching types, i.e.\ $(V_a,\mathrm{struct}_a)=(V'_a,\mathrm{struct}'_a)$ for each $a$---is a tuple $(f^a)_{a=1}^N$ of structure-preserving, connection-compatible bundle isomorphisms on the fundamental bundles, with the induced maps on all $W_i$.

The target's morphisms include a Lie group isomorphism $\eta:G\to G'$. This is the relevant target category for the symmetry-first formulation. That formulation treats the structure group abstractly: two presentations that differ only by a re-identification of that group represent the same theory, so its morphisms must include such re-identifications. Requiring $\eta=\mathrm{id}$ would compare bundles with a named group rather than gauge theories. The fullness verdict below is relative to this, the symmetry-first picture's own standard of sameness.

\paragraph{The target category $\mathbf{PFB}_{\mathrm{mat}}$.} An object is a tuple
\[
Y = \bigl(P, M, G, \varpi, (W_i,\rho_i)_{i=1}^k\bigr),
\]
consisting of a principal $G$-bundle $(P,M,G)$ with principal connection $\varpi$, together with representations $\rho_i:G\to GL(W_i)$ specifying matter bundles $\Phi_i = P\times_{\rho_i}W_i$. A morphism $(\Phi,\eta,(u_i)):Y\to Y'$ consists of a Lie group isomorphism $\eta:G\to G'$, an $\eta$-equivariant principal-bundle isomorphism $\Phi:P\to P'$ (so $\Phi(pg)=\Phi(p)\eta(g)$) with $\Phi^*\varpi' = \eta_*\circ\varpi$, and linear isomorphisms $u_i:W_i\to W_i'$ intertwining the representations:
\[
u_i\circ \rho_i(g) = \rho_i'(\eta(g))\circ u_i
\]
for all $g\in G$.

\paragraph{The functor $\mathcal{F}:\mathbf{VB}_{\mathrm{gen}}\to\mathbf{PFB}_{\mathrm{mat}}$.} From any geometry-first object $X$, form the fibre product over $M$ of the admissible-frame bundles $L(E^a)$ (as described in Section~\ref{sec:setup}) of the fundamental bundles. Then:
\[
P_X := L(E^1)\times_M\cdots\times_M L(E^N), \qquad G_X := \prod_{a=1}^N \mathsf{Aut}(V_a,\mathrm{struct}_a),
\]
with the product connection induced by the $\nabla^a$. Each tensorial construction $\tau_i$ induces a representation $\rho_i:G_X\to GL(W_i)$. Then
\[
\mathcal{F}(X) := \bigl(P_X, M, G_X, \varpi_X, (W_i,\rho_i)_{i=1}^k\bigr).
\]
On morphisms, $\mathcal{F}$ sends $(f^a)$ to the induced isomorphism of frame bundles together with the induced linear maps on all tensor constructions.

\medskip

A functor between categories is an \emph{equivalence} if and only if it is full, faithful, and essentially surjective.\footnote{A functor $F:\mathbf{C}\to\mathbf{D}$ is \emph{faithful} if it is injective on each morphism set $\mathrm{Hom}_{\mathbf{C}}(X,X')\to\mathrm{Hom}_{\mathbf{D}}(F(X),F(X'))$; \emph{full} if it is surjective on each such set; and \emph{essentially surjective} if every object of $\mathbf{D}$ is isomorphic to one in the image of $F$. These three conditions correspond, respectively, to not identifying distinct morphisms, not missing morphisms, and not missing objects. See \citet{WeatherallNG2016} for the use of categorical equivalence as a criterion of theoretical equivalence in physics, and \citet{WeatherallCompass2019a, WeatherallCompass2019b} for a review.} The following proposition shows that $\mathcal{F}$ fails two of these three conditions.

\begin{prop}\label{prop:main}
$\mathcal{F}$ is faithful but not essentially surjective, and it remains not essentially surjective when the codomain is restricted to objects in which $G$ is compact and acts faithfully on the total matter content. The compact witness is menu-relative: enlarging the structure menu to admit a fixed background splitting supplies it with a preimage. The non-compact witness survives every finite tensorial menu \eqref{eq:menu}.
\end{prop}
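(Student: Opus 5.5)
The plan treats the four claims in turn: faithfulness, the compact witness on the classical menu, that witness's menu-relativity, and the non-compact witness on every finite tensorial menu.

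\emph{Faithfulness.} A morphism $(f^a)$ in $\mathbf{VB}_{\mathrm{gen}}$ is a tuple of fibrewise linear bundle isomorphisms. Its image under $\mathcal{F}$ includes the induced map on each admissible-frame bundle $L(E^a)$, sending a frame $e:V_a\to E^a_x$ to $f^a\circ e$. Fix a point $x$ and an admissible frame $e$ there. From the frame map we recover $f^a_x = (f^a\circ e)\circ e^{-1}$, so $f^a$ is determined at every point. Hence two morphisms with the same image agree componentwise, and $\mathcal{F}$ is injective on hom-sets.

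\emph{Compact witness.} Take $Y$ to be the Standard Model matter content with structure group $G'=(SU(3)\times SU(2)\times U(1))/\mathbb{Z}_6$, any connection, and the trivial bundle. $G'$ is compact and, by Section~\ref{sec:kernel}, acts faithfully on the total matter content. Suppose $Y\cong\mathcal{F}(X)$ for some $X$ on the classical menu. The morphism data include a Lie group isomorphism $\eta:\prod_a\mathsf{Aut}(V_a,\mathrm{struct}_a)\to G'$, so it suffices to show that no finite product of classical-menu stabilisers is isomorphic to $G'$. I would compare invariants.
\begin{enumerate}[(i)]
\item The Lie algebra must be $\mathfrak{su}(3)\oplus\mathfrak{su}(2)\oplus\mathbb{R}$.
\item Since $G'$ is connected, each factor must be connected, so only $U(n)$, $SU(n)$, $SO(n)$, $Sp(n)$, $GL$- and $SL$-type groups and their relatives qualify. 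Compactness rules out the non-compact forms.
\end{enumerate}
By dimension and simple type, the factors must then be, up to the low-rank coincidences $SU(2)\cong Sp(1)$ and $SO(3)$, $SU(3)$, $U(2)$ or $U(3)$, together with an abelian factor. The candidate products are therefore $SU(3)\times SU(2)\times U(1)$, $U(3)\times SU(2)$, $SU(3)\times U(2)$, and variants with $SO(3)$ replacing $SU(2)$. For each candidate I compare the center and the fundamental group with those of $G'$. The center of $G'$ is $U(1)$, and $\pi_1(G')=\mathbb{Z}$. The direct products all differ from $G'$ in one of two ways. Either their centers contain a torsion part not lying in the identity component, as for $SU(3)\times U(2)$, whose center is $\mathbb{Z}_3\times U(1)$, disconnected. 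Or their derived subgroup fails to match: the derived subgroup of $G'$ is $SU(3)\times SU(2)$, but it meets the center in $\mathbb{Z}_6$, whereas in any product these pieces split. This nonsplitting is the cleanest invariant. In $G'$ the intersection of the derived subgroup with the identity component of the center is $\mathbb{Z}_6$, while in each candidate product it is at most $\mathbb{Z}_3$ or $\mathbb{Z}_2$, never $\mathbb{Z}_6$. Pinning this down carefully across the case list, including the exceptional isomorphisms and the full-stabiliser convention, is the step I expect to be the main obstacle.

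\emph{Menu-relativity.} On the enlarged menu, take the rank-five Hermitian bundle with a volume form and a fixed splitting. Its stabiliser is $S(U(3)\times U(2))\cong G'$. I would exhibit matter constructions $\tau_i$ built from $E_3$, $E_2$ and $\Lambda^3E_3$ (or $\Lambda^2E_2$, which serves as the hypercharge line) that reproduce the Standard Model representations, so that $Y$ lies in the essential image.

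\emph{Non-compact witness.} Take $G=\mathbb{R}$ with two characters of charges $1$ and $\alpha\notin\mathbb{Q}$, on a trivial bundle. By Proposition~\ref{prop:alg}, every group in the image of $\mathcal{F}$ on any finite tensorial menu is real-algebraic. An isomorphism $\eta$ would then carry the two characters to representations of a real-algebraic group. The image of $\mathbb{R}$ in $U(1)\times U(1)$ under $t\mapsto(e^{it},e^{i\alpha t})$ is a dense line, which is not Zariski-closed. Representations of an algebraic group instead have algebraic images, so this is a contradiction. I would invoke Proposition~\ref{prop:alg} for exactly this step.
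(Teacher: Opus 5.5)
Your faithfulness argument is the paper's, and your compact witness is sound. The invariant $[G,G]\cap Z(G)_0$ is preserved by any Lie group isomorphism. It equals $\mathbb{Z}_6$ for $G'$, and it is $1$, $\mathbb{Z}_2$ or $\mathbb{Z}_3$ for every candidate product, the $SO(3)$ variants included. So it does in one step what the paper does with connectedness of the centre plus torsion in $\pi_1$. Only the identity component separates the cases: in $SU(3)\times SU(2)\times U(1)$ the derived subgroup also meets the full centre in $\mathbb{Z}_6$. Your explicit construction of the multiplets from $E_3$, $E_2$ and $\Lambda^3E_3$ for menu-relativity is correct, and more detailed than the paper's citation.

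The gap is in the non-compact witness. As written, the inference ``$G_X$ is real-algebraic, so the characters become representations of a real-algebraic group, and such representations have algebraic images'' fails. Additive $\mathbb{R}$ is itself (the real points of) an algebraic group, and $t\mapsto(e^{it},e^{i\alpha t})$ is a smooth representation of it with dense image. Algebraicity of the group does no work unless the representations are algebraic too.

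The missing step is the second fact in the paper's proof of Proposition~\ref{prop:alg}. The tensorial grammar yields only representations whose matrix entries are polynomial in $g$ and $g^{-1}$, and the intertwiners $u_i$ make $\rho_i\circ\eta$ isomorphic to such a representation.

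Your criterion also needs adjusting over $\mathbb{R}$, since algebraic images need not be Zariski-closed: squaring on $\mathbb{R}^{\times}$ is algebraic with image $\mathbb{R}_{>0}$. Use dimension instead. The image of the one-dimensional $G_X$ under an algebraic map is semialgebraic of dimension at most one, and so is its Zariski closure. By contrast, the Zariski closure of the dense line contains the two-dimensional torus $U(1)^2$.

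With these repairs your route works without classifying one-dimensional real-algebraic groups, but it needs two incommensurate charges. The paper instead identifies $G_X$ as the additive group, with the full-stabiliser convention excluding $\mathbb{R}_{>0}$. It then notes that this group has no nontrivial algebraic character, so a single charge $q\neq0$ already gives a witness. Indeed, if you cite Proposition~\ref{prop:alg} as stated, your witness falls under it directly and the density detour is unnecessary.
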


\emph{Faithfulness.} Suppose $\mathcal{F}(f)=\mathcal{F}(f')$ for morphisms $f,f':X\to X'$. The induced maps on the fibre product $L(E^1)\times_M\cdots\times_M L(E^N)$ then agree, so, projecting to the $a$-th factor, the induced maps $L(E^a)\to L(E'^a)$ agree for every $a$. A structure-preserving bundle isomorphism is determined by its action on admissible frames, so $f^a = f'^a$ for every $a$, hence $f=f'$.

\emph{Failure of essential surjectivity.} The functor $\mathcal{F}$ always produces a product group $G_X = \prod_a \mathsf{Aut}(V_a, \mathrm{struct}_a)$. Any object in the essential image of $\mathcal{F}$ must therefore have a structure group of this product form. This condition needs two qualifications. First, its content depends on the menu, which fixes the groups that can appear as factors---unitary, special unitary, orthogonal, symplectic, and their non-compact analogues. Second, it holds only up to abstract isomorphism, since a $\mathbf{PFB}_{\mathrm{mat}}$-isomorphism includes a Lie group isomorphism $\eta$ of structure groups. The question is therefore whether the target's structure group is isomorphic to \emph{some} product of menu groups.

Before imposing a faithfulness restriction, distinguish faithfulness of the matter action from presentability. Neither implies the other.

Non-faithful PFB theories can lie in the image of $\mathcal{F}$. An idle gauge factor---a factor acting trivially on all matter---is obtained by including a fundamental bundle from which no matter bundle is constructed, as in the vacuum sector of Section~\ref{sec:setup}. Take, for instance,
\be
X=\bigl((E^1,M,\mathbb{C}^3,\langle\cdot,\cdot\rangle_1,\epsilon_1,\nabla^1),\;(E^2,M,\mathbb{C}^2,\langle\cdot,\cdot\rangle_2,\epsilon_2,\nabla^2);\;(W,\tau)=(V_1,\mathrm{id})\bigr),
\ee
with matter built only from the first bundle. Then
\be
G_X=SU(3)\times SU(2),\qquad \rho_{\mathrm{tot}}(g_1,g_2)=g_1,\qquad \ker\rho_{\mathrm{tot}}=\{\mathbf{1}\}\times SU(2):
\ee
the image is the PFB theory with an idle $SU(2)$ factor. More generally, a principal $G_1\times G_2$-bundle is a fibre product of a $G_1$-bundle and a $G_2$-bundle, and every principal $SU(k)$-bundle is the admissible-frame bundle of a Hermitian bundle with volume form, including the connection. Partial kernels can also arise from the tensorial constructions themselves. With a single fundamental bundle $(\mathbb{C}^2,\langle\cdot,\cdot\rangle,\epsilon)$ and matter only the traceless conjugation construction,
\be
W=(V\otimes V^*)_0,\qquad \rho(g)A=gAg^{-1},\qquad \ker\rho=\{\pm\mathbf{1}\}=Z(SU(2)),
\ee
the image is the $SU(2)$ theory with adjoint matter, non-faithful by its centre. In general the total kernel,
\be
\ker\rho_{\mathrm{tot}}=\bigcap_i\ker\rho_i,
\ee
is fixed by which constructions the object includes, and for compact menu-product groups the Peter--Weyl argument of Section~\ref{sec:strict} makes every representation available tensorially. The Standard Model \emph{product} theory itself, with its $\mathbb{Z}_6$ kernel, is the image of the standard three-bundle object. Thus non-faithfulness alone does not obstruct presentability. Requirement \eqref{eq:aut} constrains each group factor through its fundamental fibre, not through the selected matter representations: the idle factor above still satisfies $G_2=\mathsf{Aut}(V_2,\mathrm{struct}_2)$ and has its own Yang--Mills sector. The two witnesses below concern different failures: groups that no permitted fibre presents, and representations that no permitted tensor construction produces.

Conversely, faithfulness does not secure presentability. Restricting the codomain to compact objects with faithful total matter action strengthens the result: essential surjectivity still fails after idle factors and matter kernels have been excluded. Let $\mathbf{PFB}^*_{\mathrm{mat}}$ denote the full subcategory of $\mathbf{PFB}_{\mathrm{mat}}$ consisting of objects where $G$ is compact and the total representation $\rho_{\mathrm{tot}}:G\to GL(\bigoplus_i W_i)$ is faithful. Compactness removes the representational obstruction of Section~\ref{sec:strict}, since tensor powers of a faithful representation of a compact group generate every irreducible representation. The remaining failure is therefore not caused by the tensorial grammar omitting some representation.

Even on $\mathbf{PFB}^*_{\mathrm{mat}}$, $\mathcal{F}$ is not essentially surjective. The Standard Model supplies a witness. The VB-POV produces $G = SU(3)\times SU(2)\times U(1)$, but a $\mathbb{Z}_6\subset G$ acts trivially on all Standard Model multiplets. The quotient $G' = G/\mathbb{Z}_6$ is compact and acts faithfully on the total matter content, so it defines an object of $\mathbf{PFB}^*_{\mathrm{mat}}$. But $G'$ is not isomorphic to any product of menu groups with Lie algebra $\mathfrak{su}(3)\oplus\mathfrak{su}(2)\oplus\mathfrak{u}(1)$.\footnote{Since $G'$ is connected, factors with disconnected automorphism groups ($O(n)$, and the finite $O(1)=\mathbb{Z}_2$) are excluded, and the candidates are $SU(3)\times SU(2)\times U(1)$, $SU(3)\times U(2)$, $U(3)\times SU(2)$---the global forms $\Gamma = 1, \mathbb{Z}_2, \mathbb{Z}_3$ \citep{Tong2017, GomesWeatherall_GUT}---and variants with $SO(3)$ as the weak factor. Centres separate the first three from $G'$: the centre of $G'\cong S(U(3)\times U(2))$ consists of the pairs $(\alpha\mathbf{1}_3,\beta\mathbf{1}_2)$ with $\alpha^3\beta^2=1$, a \emph{connected} group, isomorphic to $U(1)$ because $\gcd(3,2)=1$; the candidates' centres are $U(1)\times\mathbb{Z}_6$, $U(1)\times\mathbb{Z}_3$, and $U(1)\times\mathbb{Z}_2$ respectively, each disconnected by its finite factor. Fundamental groups exclude the $SO(3)$ variants: $\pi_1(G')\cong\mathbb{Z}$, generated by any lift of the hypercharge circle through the $\mathbb{Z}_6$ identification, contains no element of finite order, while an $SO(3)$ factor contributes one of order two, since $\pi_1(SO(3))=\mathbb{Z}_2$.
} Hence no object in the essential image is isomorphic to the $G'$ theory.\footnote{\label{fn:psu3}The diagonal $\mathbb{Z}_6$ is not the only blocking mechanism. A second, independent witness arises inside a single factor, from a continuous kernel. With a single fundamental bundle $(\mathbb{C}^3,\langle\cdot,\cdot\rangle)$---no volume form, so the recovered group is $U(3)$---and matter only in the traceless conjugation construction, the kernel is the centre $U(1)\subset U(3)$: continuous, invisible to the matter covariant derivative, but not a direct factor, since $U(3)\cong(U(1)\times SU(3))/\mathbb{Z}_3$ does not split. The faithful quotient $PU(3)\cong PSU(3)$ is compact, so it defines an object of $\mathbf{PFB}^*_{\mathrm{mat}}$ with no classical-menu preimage: it is connected, centreless, and simple, hence not a nontrivial product of menu groups; the only compact classical simple Lie algebra of dimension eight is $\mathfrak{su}(3)$; and the classical-menu group with that algebra is $SU(3)$, whose centre $\mathbb{Z}_3$ is nontrivial. The blocking mechanism is a non-split extension inside a single factor, with no diagonal embedding anywhere. (Contrast the $SU(2)$ adjoint example above, whose faithful quotient $SO(3)$ is classically presentable as $\mathsf{Aut}(\mathbb{R}^3,\langle\cdot,\cdot\rangle,\mathrm{vol})$.) The exclusion is again menu-relative: $PSU(3)$ is presentable off-menu with fibre $\mathfrak{su}(3)$ carrying the bracket and the symmetric cubic $d(X,Y,Z)=\mathrm{Re}\,\mathrm{Tr}(XYZ+XZY)$. The bracket alone yields $\mathsf{Aut}(\mathfrak{su}(3))\cong PSU(3)\rtimes\mathbb{Z}_2$; the outer automorphism flips the sign of $d$, so fixing the cubic cuts the stabiliser to $PSU(3)$.}

This is a failure relative to the classical menu, not to geometry-first presentations as such. The quotient $G'\cong S(U(3)\times U(2))$ is the automorphism group of a single rank-five Hermitian bundle with a complex volume form and a fixed background $3\oplus2$ splitting \citep{GomesWeatherall_GUT}. Enlarging the menu to admit the splitting---one structured fibre, carrying more background structure, in place of three independent ones---restores a preimage.\footnote{The classical menu does offer a dynamical construction of $G'$: begin with an $SU(5)$ fibre and let an adjoint Higgs supply the splitting rather than treating it as fixed background structure. This construction explains the splitting that the present $G'$ theory posits; the hypercharge locking follows from the split bundle in either case \citep{Gomes_density}, and \citet{GomesWeatherall_GUT} compares the two. It does not, however, provide a preimage of the $G'$ theory. The functor assigns the unbroken object the structure group $SU(5)$, not $G'$, and the Higgsed theory contains heavy off-diagonal gauge modes absent from the $G'$ theory. It also restricts the light chiral matter to complete packages obtained by decomposing tensors of the rank-five fibre, whereas a fixed background splitting admits any locked collection.} Thus the matter sector underdetermines the gauge group's provenance: the symmetry-first formulation does not record whether the group is the product, generated by independent bundles, or the quotient, generated by a single bundle with a fixed background splitting.

The non-compact witness is not menu-relative. Section~\ref{sec:strict} showed that matter generated tensorially from a line has charges $q(E^{\otimes n})=nq_0$, and hence a lattice spectrum. To turn this difference into a one-parameter witness, choose a theory whose charge spectrum is not a lattice. The connected one-dimensional Lie groups are $U(1)$ and additive $\mathbb{R}$; since the characters of $U(1)$ are integral, only additive $\mathbb{R}$ permits arbitrary real charge ratios. Its smooth characters $\rho_q(t)=e^{iqt}$ allow every real $q$.

Two claims must then be kept separate. Proposition~\ref{prop:alg} shows that no finite tensorial menu presents additive $\mathbb{R}$ with nontrivial charged matter. The paragraph following the proposition shows that, when two charges are incommensurate, no compact $U(1)$ theory has the same charge spectrum. The missing preimage therefore cannot be dismissed merely as a difference in global form.
\begin{prop}[Additive $\mathbb{R}$ excluded]\label{prop:alg}
Let $Y$ be a PFB theory whose structure group is the additive group $\mathbb{R}$ and whose matter content includes a one-dimensional representation $\rho_q(t)=e^{iqt}$ with $q\neq0$. Then $Y$ has no preimage under $\mathcal{F}$, for any finite tensorial menu \eqref{eq:menu}.
\end{prop}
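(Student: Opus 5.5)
Suppose, for contradiction, that $Y\cong\mathcal{F}(X)$ for some generating object $X$ on a finite tensorial menu \eqref{eq:menu}. A $\mathbf{PFB}_{\mathrm{mat}}$-isomorphism includes a Lie group isomorphism $\eta:G_X\to\mathbb{R}$, where
\[
G_X=\prod_a\mathsf{Aut}(V_a,T_a).
\]
Since each factor is a full tensor stabiliser, $G_X$ is a real-algebraic group. The plan is to show that the only way a real-algebraic group can be Lie-isomorphic to $\mathbb{R}$ forces every tensorially induced character to be trivial. This will contradict the intertwining condition $u\circ\rho(g)=\rho_q(\eta(g))\circ u$, because $\rho_q$ is nontrivial.

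\emph{Step 1: the structure of $G_X$.} Each $\mathsf{Aut}(V_a,T_a)$ is the real points of a linear algebraic group, so it has finitely many components. Since $G_X\cong\mathbb{R}$ is connected, each factor must be connected. It must also be one-dimensional or trivial, and exactly one factor is nontrivial; call it $H=\mathsf{Aut}(V,T)$. Then $H$ is a connected one-dimensional real-algebraic linear group that is Lie-isomorphic to $\mathbb{R}$. By the classification of one-dimensional connected algebraic groups over $\mathbb{R}$ (the identity component of the real points of a torus or of $\mathbb{G}_a$), $H$ is one of three groups: $SO(2)$, $\mathbb{R}_{>0}$, or the unipotent group $\mathbb{G}_a(\mathbb{R})$. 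The first is compact and so is excluded. Hence $H$ is either a split-torus identity component or unipotent.

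\emph{Step 2: the characters arising tensorially.} Every matter representation is built from $V$ (and the idle trivial factors) by tensor products, duals, exterior powers, direct sums and structure-cut sub-bundles. Each such representation is therefore an algebraic representation of $H$, restricted to $H$. We then treat the two cases separately.

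\begin{itemize}
\item If $H$ is unipotent, every algebraic representation is unipotent. A one-dimensional summand $W$ then carries the trivial character, which contradicts $q\neq0$.
\item If $H=\mathbb{R}_{>0}$, write $H=\{\exp(tX)\}$ with $X$ diagonalisable with real eigenvalues. Algebraicity forces these eigenvalues into $\lambda\mathbb{Z}$, and every one-dimensional tensorial character has the form $t\mapsto e^{n\lambda t}$, which is real-valued. A unitary character $e^{iqt}$ with $q\neq0$ is not real-valued. Intertwining $\rho(g)$ with $\rho_q(\eta(g))$ by a linear isomorphism $u$ preserves eigenvalues, so the two characters would have to coincide. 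This gives the contradiction.
\end{itemize}

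The general principle behind both cases is that any character of an algebraic group, composed with $\eta^{-1}$, has the form $e^{ct}$ with $c$ real (split case) or $c=0$ (unipotent case). It never has the form $e^{iqt}$ with $q\ne0$, because purely imaginary weights would require a compact torus factor.

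\emph{Main obstacle.} The delicate point is Step 1. We must rule out a real-algebraic group that is Lie-isomorphic to $\mathbb{R}$ through a non-algebraic isomorphism, and we must carefully control which real forms and components can occur. The first thing to try is to argue at the level of Lie algebras: take the Zariski closure of the image of $\rho_{\mathrm{tot}}$ and use the Jordan decomposition of a generator $X$ of $\mathfrak{h}$. If $X$ has a nonzero elliptic (imaginary-eigenvalue) semisimple part, then the algebraic hull of $\exp(\mathbb{R}X)$ contains that elliptic part. The full stabiliser, being algebraic, would then contain a compact torus, and it could not be isomorphic to $\mathbb{R}$ unless it were that torus alone. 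In that case the group is $SO(2)\cong U(1)$, which is not isomorphic to $\mathbb{R}$. Once elliptic parts are excluded, the eigenvalue argument of Step 2 finishes the proof.
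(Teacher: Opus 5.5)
Your proof is correct. Its skeleton is the paper's: full tensor stabilisers are real-algebraic, tensorial representations are algebraic, the classification of connected one-dimensional real groups pins down $H$, and no tensorial character can match $\rho_q\circ\eta$. It differs in two places.

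\emph{The $\mathbb{R}_{>0}$ case.} The paper never has this branch. Because \eqref{eq:menu} takes the full stabiliser, $H$ is Zariski-closed, so a connected $H$ is the full real points of a connected one-dimensional algebraic group: $\mathbb{R}$, $\mathbb{R}^{\times}$ or $U(1)$. A split torus would contribute all of $\mathbb{R}^{\times}$, which is disconnected, so your $\mathbb{R}_{>0}$ case is vacuous. Your real-valuedness argument for it is still valid. It even shows something the paper's argument does not: the unitary character $e^{iqt}$ would remain excluded on a menu that passed to identity components.

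\emph{The additive case.} The paper uses an elementary degree count, since $\chi(2t)=\chi(t)^2$ forces $\deg\chi=0$. You instead invoke the fact that algebraic representations of unipotent groups are unipotent. Both work; the paper's argument is more elementary.

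\emph{What each route yields.} The paper's route excludes every nontrivial character of additive $\mathbb{R}$, including real exponentials $e^{\lambda t}$. Your unipotent branch gives the same once the $\mathbb{R}_{>0}$ case is discarded, but your $\mathbb{R}_{>0}$ branch relies on unitarity of $\rho_q$ and does not. In your favour, you handle the product $\prod_a\mathsf{Aut}(V_a,T_a)$ explicitly, reducing to exactly one nontrivial connected factor; the paper passes over this by writing $G_X=\mathsf{Aut}(V,T)$. Your Jordan-decomposition remark (no elliptic part, hence real spectrum, hence only real-exponential tensorial characters) is also a workable alternative to citing the classification, via Chevalley's replicas.
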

\emph{Proof.} A preimage $X$ would have recovered group $G_X=\mathsf{Aut}(V,T)$ isomorphic to $\mathbb{R}$ as a Lie group, since an isomorphism in $\mathbf{PFB}_{\mathrm{mat}}$ includes a Lie group isomorphism $\eta$ of structure groups. Three facts, detailed in Appendix~\ref{app:algebraic-charges}, rule this out. First, $G_X$ is \emph{real-algebraic}: for each tensor $t\in T$ the condition $A\cdot t=t$ is polynomial in the matrix entries of $A$, and \eqref{eq:menu} takes the full solution set. Second, every matter representation built by the tensorial grammar of Section~\ref{sec:setup} is \emph{algebraic}: its matrix entries are polynomials in the entries of $g$ and $g^{-1}$. Third, the only one-dimensional real-algebraic group isomorphic to $\mathbb{R}$ as a Lie group is the additive algebraic group, and the additive group has no nontrivial algebraic character: an algebraic character is a polynomial $\chi$ satisfying
\[
\chi(s+t)=\chi(s)\,\chi(t),
\]
so $\chi(2t)=\chi(t)^{2}$, and comparing degrees gives $d=2d$, hence $d=0$ and $\chi\equiv1$. The matter representation $\rho_q$ with $q\neq0$ is a nontrivial character---smooth, but not algebraic---so no tensorial presentation carries it. \hfill$\square$

Proposition~\ref{prop:alg} strengthens the lattice argument of Section~\ref{sec:strict} by removing any restriction to the classical menu. Every one-dimensional recovered group has a tensorial character lattice of rank at most one: rank one for $U(1)$ and $\mathbb{R}^{\times}$, whose algebraic characters are the integer powers $z\mapsto z^{n}$, and rank zero for the additive group. The obstruction is therefore stronger than the failure to place two incommensurate charges in one lattice.

\paragraph{No compact replacement.} The absence of a preimage would be less significant if every excluded theory had a compact replacement: a VB-presentable theory with the same Lie algebra and local content, differing only in global form. For commensurate charges such a replacement exists---the $U(1)$ theory of Section~\ref{sec:strict}---so there the missing preimage can be read as the VB-POV correcting the global form of the gauge group. For incommensurate charges there is no compact replacement, by the lattice arithmetic of tensorial charge described above. Again, this is easy to see, as follows. Fix a fundamental Hermitian line $E$ and let $q_0$ be its charge: the character by which the recovered group $U(1)$ acts on the fibre. The grammar generates the charges
\[
q(E^{\otimes n})=n\,q_0,\qquad q(E^{*})=-\,q_0,\qquad n\in\mathbb{Z}:
\]
tensor products add charges and duals reverse them, so every available charge lies in the lattice $q_0\mathbb{Z}$. If $1$ and $\alpha$ both lie in $q_0\mathbb{Z}$, then $1=nq_0$ and $\alpha=mq_0$ for integers $n,m$, so $\alpha=m/n\in\mathbb{Q}$. Incommensurate charges therefore cannot fit in a single lattice. 

A failure of essential surjectivity alone could be removed by restricting the codomain to the essential image. Proposition~\ref{prop:fullness} shows that it would not suffice: fullness fails at an object inside the essential image.

\begin{prop}[Failure of fullness]\label{prop:fullness}
The functor $\mathcal{F}:\mathbf{VB}_{\mathrm{gen}}\to\mathbf{PFB}_{\mathrm{mat}}$ is not full.
\end{prop}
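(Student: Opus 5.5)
The plan is to exhibit a single object $X$ and a morphism $\mathcal{F}(X)\to\mathcal{F}(X)$ in $\mathbf{PFB}_{\mathrm{mat}}$ that is not $\mathcal{F}(f)$ for any $f$. The abstract already suggests the example. Take $M$ with a single fundamental bundle $E$ with fibre $V=\mathbb{R}^{2m}$, carrying a positive-definite inner product and a volume form. Then $G_X=\mathsf{Aut}(V,\langle\cdot,\cdot\rangle,\mathrm{vol})=SO(2m)$. Take $E$ trivial, $E=M\times\mathbb{R}^{2m}$, with the trivial flat connection, and a single matter fibre $W=V$ with $\tau=\mathrm{id}$. Then $\mathcal{F}(X)=(M\times SO(2m),M,SO(2m),0,(\mathbb{R}^{2m},\iota))$, where $\iota$ is the defining representation.

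The first step is to build the extra morphism. Fix an improper orthogonal map $r\in O(2m)\setminus SO(2m)$, for instance the reflection $\mathrm{diag}(-1,1,\dots,1)$. Set $\eta(g)=rgr^{-1}$. Conjugation by $r$ preserves $SO(2m)$ and is a Lie group automorphism. Define $\Phi(x,g)=(x,\eta(g))$. This is $\eta$-equivariant, because $\Phi(x,gh)=(x,\eta(g)\eta(h))$. It also preserves the trivial connection, since $\Phi^*0=0=\eta_*\circ 0$. Finally put $u=r$ on $W$. Then $u\,\iota(g)=rg=\eta(g)r=\iota(\eta(g))\,u$, so the intertwining condition holds and $(\Phi,\eta,u)$ is a morphism of $\mathbf{PFB}_{\mathrm{mat}}$.

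The second step shows this morphism is not in the image of $\mathcal{F}$. Any morphism $f:X\to X$ of $\mathbf{VB}_{\mathrm{gen}}$ is a structure-preserving bundle automorphism of $E$. On each fibre it is therefore some $h_x\in SO(2m)$, since it preserves both the inner product and the volume form. The induced map on admissible frames is $p\mapsto f\circ p$, that is, left multiplication $(x,g)\mapsto(x,h_xg)$ in the trivialisation. This commutes with the right action, so the group isomorphism $\eta$ attached to $\mathcal{F}(f)$ is the identity. The matter map it induces is $h_x$, which is again proper.

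So $\mathcal{F}(f)$ always has $\eta=\mathrm{id}$, whereas our morphism has $\eta=\mathrm{Ad}_r$. The remaining task, and the main obstacle, is to show that $\mathrm{Ad}_r\neq\mathrm{id}$ as an automorphism of $SO(2m)$, and indeed that it is not inner, so the gap cannot be removed by re-identification. For $m\ge1$, $\mathrm{Ad}_r$ is nontrivial: when $m=1$, conjugating a rotation by a reflection inverts it. In general, if $\mathrm{Ad}_r$ equalled $\mathrm{Ad}_s$ for some $s\in SO(2m)$, then $s^{-1}r$ would centralise $SO(2m)$. The centraliser of $SO(2m)$ in $O(2m)$ is $\{\pm\mathbf{1}\}$, by Schur's lemma applied to the irreducible defining representation for $2m\ge 3$, and by direct check for $m=1$, where $-\mathbf{1}\in SO(2)$. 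Since $\det(\pm\mathbf{1})=1$ in even dimension, $s^{-1}r$ would be proper, which contradicts $\det r=-1$. This is exactly where the even dimension matters: in odd dimension $-\mathbf{1}$ is improper, and $\mathrm{Ad}_r$ becomes inner.

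The careful point is that distinctness only needs $\eta$ to differ, which already follows from nontriviality. The non-inner statement is what lets the later discussion of normalisers and centralisers go through. It also shows that no choice of a different $f$ could realise $\eta$, because every $\mathcal{F}(f)$ has $\eta=\mathrm{id}$.
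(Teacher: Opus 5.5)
Your proposal is correct and follows the paper's proof essentially step for step. It uses the same real oriented $\mathbb{R}^{2m}$ object on a trivial flat bundle with defining matter, the same morphism $(\Phi,\mathrm{Ad}_r,u=r)$, the same observation that every $\mathcal{F}(f)$ acts by left multiplication on frames and so has $\eta=\mathrm{id}$, and the same commutant-plus-determinant argument that $\mathrm{Ad}_r$ is outer.

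There is one small slip. For $m=1$ the centraliser of $SO(2)$ in $O(2)$ is all of $SO(2)$, not $\{\pm\mathbf{1}\}$, because the defining representation is irreducible over $\mathbb{R}$ but not absolutely irreducible, so Schur gives $\mathbb{C}$ rather than scalars. That centraliser still consists of proper maps, so your conclusion is unaffected, and the paper restricts to $m\geq 2$ in any case.
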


\emph{Proof.} Recall the morphisms on the two sides. In $\mathbf{VB}_{\mathrm{gen}}$, a morphism is a tuple $(f^a)$ of structure-preserving, connection-compatible bundle isomorphisms on the fundamental bundles. In $\mathbf{PFB}_{\mathrm{mat}}$, a morphism is a triple $(\Phi,\eta,(u_i))$, where $\eta$ is a Lie group isomorphism of the structure groups, $\Phi$ is an $\eta$-equivariant principal-bundle isomorphism, and the $u_i$ intertwine the matter representations.

First, every morphism in the image of $\mathcal{F}$ has trivial $\eta$-component. Let $f=(f^a)_{a=1}^N:X\to X'$ be a morphism in $\mathbf{VB}_{\mathrm{gen}}$. Fix $a$, and let $L(E^a)$ be the admissible-frame bundle of $E^a$, with structure group $G_a=\mathsf{Aut}(V_a,\mathrm{struct}_a)$. A point of $L(E^a)$ over $x\in M$ is an admissible frame $e:V_a\to E^a_x$. The induced map on frame bundles is $\Phi_{f^a}(e)=f^a_x\circ e$. For any $g\in G_a$,
\[
\Phi_{f^a}(e\cdot g)=\Phi_{f^a}(e\circ g)=f^a_x\circ e\circ g=(f^a_x\circ e)\circ g=\Phi_{f^a}(e)\cdot g.
\]
Thus $\Phi_{f^a}$ is $G_a$-equivariant with respect to the identity automorphism of $G_a$. Taking the product over $a$, every morphism $\mathcal{F}(f)$ has $\eta=\mathrm{id}_{G_X}$. It remains to exhibit a PFB morphism between objects in the image of $\mathcal{F}$ whose $\eta$-component cannot arise from a VB morphism.

Inner automorphisms do not give the relevant fullness failure. If $\eta_h(g)=hgh^{-1}$ and $(\Phi,\eta_h,u)$ is a PFB morphism, then setting $\Phi'(p)=\Phi(p)h$ and $u'=\rho(h^{-1})\circ u$ gives an equivalent description with $\eta=\mathrm{id}$ on the associated matter bundles.\footnote{Equivariance with $\eta=\mathrm{id}$ follows from $\Phi'(pg)=\Phi(pg)h=\Phi(p)\eta_h(g)h=\Phi(p)hgh^{-1}h=\Phi'(p)g$. On matter bundles, $[\Phi'(p),u'w]=[\Phi(p)h,\rho(h^{-1})uw]=[\Phi(p),uw]$ by the equivalence relation defining associated bundles.} The witness must therefore use an outer automorphism.

Take a geometry-first object
\[
X_{\mathbb{R}}=(E\to M,\; V=\mathbb{R}^{2m},\; g,\;\mathrm{vol},\;\nabla),\qquad m\geq 2,
\]
where $g$ is a positive-definite inner product and $\mathrm{vol}$ is a volume form. The recovered group is
\[
G_X=\mathsf{Aut}(\mathbb{R}^{2m},g,\mathrm{vol})=SO(2m).
\]
Include the defining matter fibre $W=V$, with $\rho$ the defining representation.

Choose a reflection $R\in O(2m)$, so $\det R=-1$, and define
\[
\eta(g)=RgR^{-1}.
\]
Conjugation by $R$ preserves orthogonality and determinant, so $\eta$ is an automorphism of $SO(2m)$. It is outer. If $\eta$ were inner, then for some $h\in SO(2m)$,
\[
RgR^{-1}=hgh^{-1}\qquad\text{for all }g\in SO(2m).
\]
Then $h^{-1}R$ would commute with every element of the defining representation of $SO(2m)$. For $2m\geq4$, the commutant of this representation consists of the scalar maps, so $h^{-1}R=\lambda\mathbf{1}$. Hence $R=\lambda h$. Taking determinants gives
\[
-1=\det R=\lambda^{2m}\det h=\lambda^{2m},
\]
which has no real solution. Thus $\eta$ is outer.\footnote{For $m=1$, $SO(2)$ is abelian, so conjugation by a reflection is the inversion automorphism $g\mapsto g^{-1}$, which is outer because all inner automorphisms of an abelian group are trivial; the commutant argument is not needed. Under the morphisms of this section---linear, structure-preserving---$m=1$ would therefore also serve. The restriction to $m\geq2$ rules out the semilinear reply discussed in Appendix~\ref{app:complex}. Since $SO(2)\cong U(1)$, the group also has the classical presentation given by a Hermitian line, and complex conjugation implements inversion at that presentation. For $m\geq2$ no classical-menu stabiliser on a complex fibre is isomorphic to $SO(2m)$, so no analogous classical presentation exists.
}

On the trivial principal bundle $P=M\times SO(2m)$ with flat connection, define
\[
\Phi(x,g)=(x,RgR^{-1}).
\]
This map is $\eta$-equivariant:
\[
\Phi((x,g)h)=(x,RghR^{-1})=(x,RgR^{-1}RhR^{-1})=\Phi(x,g)\eta(h).
\]
It preserves the flat connection because $\eta$ is a Lie group automorphism.\footnote{For the Maurer--Cartan form, $\Phi^*(g^{-1}\d g)=\eta_*(g^{-1}\d g)$.} On the defining matter fibre set
\[
u=R.
\]
The intertwining condition is
\[
u\rho(g)=\rho(\eta(g))u,
\]
which reads
\[
Rg=(RgR^{-1})R.
\]
Therefore $(\Phi,\eta,u):\mathcal{F}(X_{\mathbb{R}})\to\mathcal{F}(X_{\mathbb{R}})$ is a morphism in $\mathbf{PFB}_{\mathrm{mat}}$.

No VB morphism induces this one. Every morphism in the image of $\mathcal{F}$ has $\eta=\mathrm{id}$, while here $\eta=\mathrm{Ad}_R$, where $\mathrm{Ad}_R(g):=RgR^{-1}$. Nor can the inner shift described above convert it into an image morphism. A shift $\Phi\mapsto\Phi\cdot h$, $u\mapsto\rho(h^{-1})\circ u$ with $h\in SO(2m)$ replaces $\eta$ by $\mathrm{Ad}_{h^{-1}}\circ\eta$, since
\[
(\Phi\cdot h)(pg)=\Phi(p)\,\eta(g)\,h=(\Phi\cdot h)(p)\cdot\bigl(h^{-1}\eta(g)h\bigr).
\]
The shift changes $\eta$ only within its outer class, and $[\mathrm{Ad}_R]\neq[\mathrm{id}]$ because $\eta$ is outer. Hence no re-presentation of $(\Phi,\eta,u)$ lies in the image of
\[
\mathrm{Hom}_{\mathbf{VB}_{\mathrm{gen}}}(X_{\mathbb{R}},X_{\mathbb{R}})\;\longrightarrow\;\mathrm{Hom}_{\mathbf{PFB}_{\mathrm{mat}}}\bigl(\mathcal{F}(X_{\mathbb{R}}),\mathcal{F}(X_{\mathbb{R}})\bigr),
\]
and $\mathcal{F}$ is not full. \hfill$\square$

Two natural attempts to supply the missing morphism at $X_{\mathbb{R}}$ fail. The direct candidate is the fibre map $v=R$. It preserves $g$ but not the volume form:
\[
R^{*}\mathrm{vol}=(\det R)\,\mathrm{vol}=-\mathrm{vol}.
\]
So $R$ is not an automorphism of $X_{\mathbb{R}}$; it is a morphism from $X_{\mathbb{R}}$ to the object carrying the opposite orientation, and fullness concerns the Hom-set at $X_{\mathbb{R}}$. The second attempt enlarges the morphisms to semilinear maps, as one can do over complex fibres (Appendix~\ref{app:complex}). A $\sigma$-semilinear map satisfies $v(\lambda x)=\sigma(\lambda)v(x)$ for a field automorphism $\sigma$ of the scalars. Over $\mathbb{R}$ this adds nothing. Every field automorphism of $\mathbb{R}$ fixes $\mathbb{Q}$, preserves squares and hence order, and therefore fixes every real number. Every candidate fibre map is linear. If it preserves $(g,\mathrm{vol})$, it belongs to $SO(2m)$ and induces an inner automorphism. Thus the fibre structure that determines the group also excludes the improper automorphism from the source Hom-set. One could instead build a different source category that admits improper re-identifications as morphisms; that changes the source-side notion of structure preservation rather than making $\mathcal{F}$ full.

One can also try to weaken the object rather than enlarge the morphisms: replace the volume form by the unordered pair
\[
[\mathrm{vol}]=\{\mathrm{vol},-\mathrm{vol}\}.
\]
Then $R$ would preserve the coarsened orientation datum. But the coarsened object does not recover the original group. Any linear map preserving $g$ already has determinant $\pm1$: in matrix notation $v^TGv=G$, and taking determinants gives $(\det v)^2=1$. Thus every $g$-preserving map sends $\mathrm{vol}$ to $\pm\mathrm{vol}$. Preserving $(g,[\mathrm{vol}])$ is therefore the same as preserving $g$ alone, and the recovered group becomes
\[
\mathsf{Aut}(V,g)=O(2m),
\]
not $SO(2m)$. The full stabiliser of $(g,[\mathrm{vol}])$ is therefore the same as the full stabiliser of $g$; the sign-orbit adds no further stabiliser information.

\paragraph{Centralisers and normalisers.}
At the $SO(2m)$ witness, centralisers and normalisers classify the additional target morphisms. For a represented group $G\subseteq GL(V)$, define
\be\label{eq:cent-norm}
\begin{aligned}
C_{GL(V)}(G)&=\{u\in GL(V)\;:\;ug=gu\ \text{for all}\ g\in G\},\\
N_{GL(V)}(G)&=\{u\in GL(V)\;:\;uGu^{-1}=G\}.
\end{aligned}
\ee
An element of the centraliser intertwines the representation with $\eta=\mathrm{id}$. An element of the normaliser intertwines it after the group has been re-identified by $\eta=\mathrm{Ad}_u$. For the trivial flat object used here, every target automorphism arises from a normaliser element.

Let $(\Phi,\eta,u)$ be an automorphism of $\mathcal{F}(X_{\mathbb{R}})$ in $\mathbf{PFB}_{\mathrm{mat}}$. Because the matter representation is faithful and defining, the intertwining condition gives $\eta(g)=ugu^{-1}$. Hence $u$ belongs to $N:=N_{GL(V)}(SO(2m))$ and $\eta=\mathrm{Ad}_u$. Conversely, every $u\in N$ defines such an automorphism on the trivial bundle by $\Phi(x,g)=(x,ugu^{-1})$. Composing with a geometry-side automorphism $h\in SO(2m)$ replaces $u$ by $h^{-1}u$. Since $SO(2m)$ is normal in $N$, the classes not removed in this way form $N/SO(2m)$. Morphisms in the image of $\mathcal{F}$ correspond to the identity class.\footnote{This description is restricted to the present example: the bundle is trivial, the connection is flat, and the matter list contains one faithful defining representation. For a single faithful representation there is an exact sequence $1\to C_{GL(V)}(G)/Z(G)\to N_{GL(V)}(G)/G\to \mathrm{Out}(G)_{[V]}\to 1$, where $Z(G)=C_{GL(V)}(G)\cap G$ and $\mathrm{Out}(G)_{[V]}$ consists of the outer classes realised by conjugation in $GL(V)$. With several representations, the target may also contain independent elements of their commutants.}

The commutant of the defining representation of $SO(2m)$ consists of scalar maps. If $u\in N$, then the form
\[
(v,w)\longmapsto g(u^{-1}v,u^{-1}w)
\]
is $SO(2m)$-invariant and positive-definite. The defining representation admits a unique invariant positive-definite form up to scale, so this form equals $\lambda g$ for some $\lambda>0$. Thus $u$ is a positive multiple of an orthogonal map, and
\be\label{eq:normaliser}
N_{GL(2m,\mathbb{R})}\bigl(SO(2m)\bigr)=\mathbb{R}^{>0}\cdot O(2m),
\qquad
N/SO(2m)\;\cong\;\mathbb{R}^{>0}\times\mathbb{Z}_{2}.
\ee
The positive-scalar factor rescales the inner product and has $\eta=\mathrm{id}$. The $\mathbb{Z}_{2}$ factor reverses orientation and induces the outer automorphism $\mathrm{Ad}_R$.

The scalar factor gives a shorter proof of non-fullness. For $\lambda\neq\pm1$, $(\mathrm{id},\mathrm{id},\lambda\mathbf{1})$ is an automorphism of $\mathcal{F}(X_{\mathbb{R}})$ with no geometry-first preimage. I retain the reflection as the main witness because this scalar argument depends on allowing arbitrary linear matter intertwiners. If the target category also records the induced inner products and requires each $u_i$ to preserve them, the scalar maps disappear, whereas the reflection remains. This removes the positive-scalar factor but not the orientation-reversing class, so Proposition~\ref{prop:fullness} survives the strengthened morphism condition. This is the categorical analogue of the commutant freedom in the Yukawa analysis of \citet{Gomes_particles}. There symmetry fixes the Yukawa map only up to insertions from the commutant, while the geometry-first fibre structures select a particular contraction; here the same freedom enlarges a Hom-set.

For the unoriented object $(\mathbb{R}^{2m},g)$, the recovered group is $O(2m)$ and the same calculation gives $N/O(2m)\cong\mathbb{R}^{>0}$. The reflection now belongs to the group, so only the scale ambiguity remains. Adding $\mathrm{vol}$ changes the stabiliser from $O(2m)$ to $SO(2m)$ and places $R$ in the normaliser but outside the group. The status of the improper map therefore depends on the declared fibre structure. The same positive-scalar ambiguity appears when one reconstructs a spacetime metric from local Lorentz transition groups: the normaliser of the orthogonal group determines the metric only up to a conformal factor, which a volume element fixes \citep{GMPR_charts}. For $O(2m)$ the quotient contains only the scalar factor; for $SO(2m)$ it also contains the orientation-reversing class.

Proposition~\ref{prop:fullness} concerns the endomorphisms of the real oriented object $X_{\mathbb{R}}$; a different presentation may implement the same improper automorphism. Appendix~\ref{app:complex} gives a complexified presentation with a suitable semilinear morphism. This does not restore fullness: the PFB images of the real and complex presentations are isomorphic although the source objects are not, so the functor is not full on the Hom-set between them. The appendix also shows that, for the real and complex scalar fields used here, semilinear enlargements realise only outer classes of order at most two at any fixed presentation. Table~\ref{tab:presentation-morphisms} lists the relevant PFB isomorphisms.

Combining Propositions~\ref{prop:main} and~\ref{prop:fullness}, $\mathcal{F}$ is faithful but neither essentially surjective nor full, and hence is not an equivalence. Enlarging the source menu can supply preimages for particular objects, and enlarging the source morphisms can supply particular missing arrows. But as long as the target retains charged additive-$\mathbb{R}$ theories, no finite tensorial enlargement of the source can make $\mathcal{F}$ essentially surjective, by Proposition~\ref{prop:alg}. An equivalence would therefore also require restricting the target category.

The two negative results are independent. Essential surjectivity concerns objects: some PFB theories have no preimage on the specified geometry-first menu. The diagonal quotient $G'$ gives a compact menu-relative case, while the non-compact incommensurate-charge theories of Section~\ref{sec:strict} give cases excluded by every finite tensorial menu. Fullness concerns morphisms: even when a PFB object lies in the image of $\mathcal{F}$, that image object may have PFB automorphisms with no geometry-first preimage.

It is sometimes argued that failure of essential surjectivity is philosophically uninteresting: if a functor is full and faithful but not essentially surjective, one can restrict the codomain to the essential image and recover equivalence (\citealp{WeatherallNG2016}; for discussion, \citealp{Shi2025}). That response does not apply here, because $\mathcal{F}$ is not full. Nor would such a restriction be philosophically neutral: it would define the symmetry-first category by the VB-POV's admissibility conditions. There is also a physical point beside the methodological one. The theories the restriction discards encode a possible course of experience: a symmetry-first theorist who retains the charged additive-$\mathbb{R}$ theories can describe a world in which the next measured charge is incommensurate with the known ones without fundamentally altering her Lagrangian, whereas the restricted category, like the geometry-first category it would imitate, cannot. Restricting the codomain therefore changes the framework's modal empirical content---which future spectra it can represent---and not merely its definitional book-keeping.

\section{Conclusion}\label{sec:conclusion}

The geometry-first and symmetry-first formulations of gauge theory should not be called equivalent without a good deal of provisos. Firstly, they differ in theory-space: the VB-POV applies only to theories whose matter sector is generated tensorially from appropriate fundamental bundles, which excludes PFB-POV theories with non-compact gauge groups. Second ,they differ in recoverable structure: a principal bundle together with its matter bundles does not record which bundles are fundamental, nor how the matter content is generated from them. Lastly, they also differ in explanatory order: the geometry-first formulation derives the gauge group from fibre structure, while the symmetry-first formulation posits the structure group independently.

The categorical comparison gives the formal counterpart of these differences. The functor $\mathcal{F}:\mathbf{VB}_{\mathrm{gen}}\to\mathbf{PFB}_{\mathrm{mat}}$ is faithful, but it is neither essentially surjective nor full. Essential surjectivity fails in two ways. The compact Standard Model quotient $G/\mathbb{Z}_6$ is not produced by separate fundamental bundles on the classical menu, though it is geometry-first presentable by a single bundle carrying a fixed background splitting \citep{GomesWeatherall_GUT}. The non-compact incommensurate-charge theories have no preimage on any finite tensorial menu. And fullness fails already on the classical menu: the real oriented object with fibre $\mathbb{R}^{2m}$ recovers $SO(2m)$, but the PFB side admits an automorphism of the image object induced by an improper orthogonal map. The VB side has no corresponding morphism, because the reflection reverses the volume form, while every map preserving both $g$ and $\mathrm{vol}$ lies in $SO(2m)$ and induces only an inner automorphism. Weakening orientation to a sign-orbit changes the recovered group to $O(2m)$. 

The geometry-first reformulation defended in \citet{Gomes_particles} is therefore not a notational variant of the principal-bundle formalism. It uses stricter primitive objects and a different explanatory order. For compact groups, the main question is often one of geometric provenance and economy rather than mere presentability. The PFB-POV takes the gauge group as a single Lie group $G$; it does not require $G$ to be a product, does not specify which factor comes from which fundamental fibre, and does not encode the tensorial genealogy of the matter representations. The VB-POV records this interaction structure directly: each fundamental bundle defines an interaction sector, particle species interact through a force when their matter fibres share the corresponding fundamental factor, and the gauge group is read off from the fibre structures. The non-equivalence established here is the claim that this interaction structure is not recoverable from the symmetry-first objects alone.

I should note that physicists already impose some of these restrictions in practice. For instance, a gauge factor that acts trivially on all matter is formally allowed in both formulations, but it is usually omitted because it has no empirical role in the matter sector. \citet{Skinner_QFT} is unusually explicit about another restriction. After constructing principal bundles from frame bundles under the assumption $G\simeq\mathsf{Aut}(V)$, looking at a single matter bundle he describes the two viewpoints as equivalent for the matrix Lie groups common in physics, but treats the principal-bundle formulation as more fundamental for exceptional groups. This qualification identifies the assumptions at issue here: classical groups acting faithfully on fundamental vector spaces, with matter generated tensorially from them. If these assumptions define the intended class of models, they should be stated rather than treated as automatic. The geometry-first formulation states them.

\paragraph{Acknowledgements.} I thank Jim Weatherall for extensive correspondence on this material, and especially for pressing the conjugation/semilinear reading that motivates Appendix~\ref{app:complex}.

\clearpage
\appendix

\section{Algebraic charge obstruction}\label{app:algebraic-charges}

This appendix completes the proof of Proposition~\ref{prop:alg}: no finite tensorial menu, classical or otherwise, recovers the additive group $\mathbb{R}$ with charged matter. (The lattice argument of Section~\ref{sec:functor} establishes the separate claim that incommensurate charges also have no compact replacement; I do not repeat it here.)

A subgroup of $GL(V)$ is \emph{real-algebraic} if it is the full solution set of a system of polynomial equations in the real matrix entries. For a complex fibre one uses the real and imaginary parts of the entries. The stabiliser \eqref{eq:menu} is real-algebraic for every finite tensorial menu, since the condition $A\cdot t=t$ is, for each tensor $t$, a finite system of polynomial equations in the entries of $A$. Unitarity, for example, reads $\bar A^T A=\mathbf{1}$, polynomially in the real coordinates.

A representation is \emph{algebraic} when its matrix entries are polynomial functions of the entries of $g$ and $g^{-1}$. The tensorial grammar of Section~\ref{sec:setup} produces only algebraic representations: tensor products, duals, exterior powers, direct sums, and structure-built images or kernels are all described by polynomial formulas in $g$ and $g^{-1}$.

A one-dimensional representation $\chi:G\to\mathbb{C}^{\times}$ is a character. For abelian one-dimensional matter, the character encodes the charge: a field of charge $q$ for additive $\mathbb{R}$ transforms by $t\mapsto e^{iqt}$. Tensor products multiply characters, which adds charges; duals invert characters, which changes the sign of the charge. Hence a finite tensorial presentation produces an integral character lattice.

Now suppose a finite tensorial presentation recovered a group isomorphic to additive $\mathbb{R}$. Since the recovered group is a full tensor stabiliser, it is real-algebraic; since it is isomorphic to $\mathbb{R}$, it is also connected and one-dimensional. Up to real-algebraic isomorphism, a connected one-dimensional real linear algebraic group is the additive group, the split multiplicative group, or the compact torus; their real points are $(\mathbb{R},+)$, $\mathbb{R}^{\times}$, and $U(1)$, respectively \citep{Borel1991}. Only the additive group is isomorphic to $\mathbb{R}$ as a Lie group: $\mathbb{R}^{\times}$ is disconnected and $U(1)$ is compact. The identity component $\mathbb{R}_{>0}\subset\mathbb{R}^{\times}$ is isomorphic to $\mathbb{R}$ as a Lie group, but it cannot appear by itself here, because \eqref{eq:menu} takes the full stabiliser rather than an identity component.

The additive algebraic group has no nontrivial algebraic characters. If $\chi$ is such a character, then $\chi$ is a polynomial satisfying
\[
\chi(s+t)=\chi(s)\chi(t).
\]
Setting $s=t$ gives $\chi(2t)=\chi(t)^2$. If $\chi$ has positive degree $d$, the left side has degree $d$ and the right side has degree $2d$, a contradiction. Therefore $d=0$ and $\chi$ is constant; since $\chi(0)=1$, $\chi\equiv1$. The smooth characters $t\mapsto e^{iqt}$ therefore do not arise from tensorial constructions when the recovered group is additive $\mathbb{R}$: they are smooth but not algebraic. This completes the proof of Proposition~\ref{prop:alg}. Any preimage would recover a full tensor stabiliser isomorphic to $\mathbb{R}$ as a Lie group, hence the additive algebraic group. Its only algebraic character is trivial, while the charged matter representation is nontrivial.

For irrationally related charges, there is a second obstruction to any compact replacement. The image
\[
\{\mathrm{diag}(e^{it},e^{i\alpha t}):t\in\mathbb{R}\}
\]
is dense in $U(1)^2$ when $\alpha\notin\mathbb{Q}$. A polynomial that vanishes on this one-parameter subgroup vanishes on the whole two-torus, so the Zariski closure is $U(1)^2$. Thus the smallest algebraic compact group that carries both charges is two-dimensional. A one-dimensional $U(1)$ replacement exists only for commensurate charges.

\section{An off-menu complexified presentation}\label{app:complex}

Proposition~\ref{prop:fullness} concerns the real oriented object
\[
X_{\mathbb{R}}=(\mathbb{R}^{2m},g,\mathrm{vol}),
\]
whose recovered group is $SO(2m)$. The PFB-side automorphism
\[
g\mapsto RgR^{-1},\qquad R\in O(2m),\quad \det R=-1,
\]
has no preimage in $\mathrm{Hom}_{\mathbf{VB}_{\mathrm{gen}}}(X_{\mathbb{R}},X_{\mathbb{R}})$. An off-menu complex presentation can implement the same improper automorphism. This does not restore fullness; non-fullness then appears in a different Hom-set.

Begin with the original PFB morphism. For the trivial flat witness,
\[
\mathcal{F}(X_{\mathbb{R}})=
(P=M\times SO(2m),\;G=SO(2m),\;\omega_0,\;W=\mathbb{R}^{2m},\;\rho=\mathrm{def}).
\]
With $\eta(g)=RgR^{-1}$, the target category contains
\[
(\Phi,\eta,u):\mathcal{F}(X_{\mathbb{R}})\to\mathcal{F}(X_{\mathbb{R}}),
\]
where
\[
\Phi(x,g)=(x,RgR^{-1}),\qquad u=R:\mathbb{R}^{2m}\to\mathbb{R}^{2m}.
\]
The matter intertwining condition is
\[
u\rho(g)=\rho(\eta(g))u,
\]
which here reads $Rg=(RgR^{-1})R$. The VB side lacks this automorphism because every automorphism of $X_{\mathbb{R}}$ preserves both $g$ and $\mathrm{vol}$, hence lies in $SO(2m)$ and induces only an inner automorphism. The reflection $R$ preserves $g$ but sends $\mathrm{vol}$ to $-\mathrm{vol}$.

Now complexify the presentation. Let
\[
V_{\mathbb{C}}=\mathbb{C}^{2m}
\]
with the real basis inherited from $\mathbb{R}^{2m}$. Let $b$ be the complex-bilinear symmetric form extending $g$, let $h$ be the standard Hermitian form, and let $\epsilon$ be the complexification of $\mathrm{vol}$. Define
\[
X_{\mathbb{C}}=(\mathbb{C}^{2m},b,h,i\epsilon).
\]
Its complex-linear stabiliser is still $SO(2m)$. Preserving $b$ gives $A^{-1}=A^T$, while preserving $h$ gives $A^{-1}=\bar A^T$. Hence $A=\bar A$, so $A$ is a real orthogonal matrix. Preserving $i\epsilon$ then forces $\det A=1$.

Let $c$ be componentwise complex conjugation in the real basis and set
\[
u_{\mathbb{C}}=c\circ R.
\]
This map is conjugate-linear. It is compatible with these tensors in the semilinear sense:
\[
b(u_{\mathbb{C}}x,u_{\mathbb{C}}y)=\overline{b(x,y)},\qquad
h(u_{\mathbb{C}}x,u_{\mathbb{C}}y)=\overline{h(x,y)},
\]
and
\[
(i\epsilon)(u_{\mathbb{C}}x_1,\ldots,u_{\mathbb{C}}x_{2m})=\overline{(i\epsilon)(x_1,\ldots,x_{2m})}.
\]
The last equality uses both sign changes: complex conjugation sends $i$ to $-i$, and the reflection sends $\epsilon$ to $-\epsilon$. Their composite preserves the rescaled volume form in the semilinear sense. Since the matrices of $SO(2m)$ are real in this presentation, $c$ commutes with their action, and $u_{\mathbb{C}}$ induces
\[
g\mapsto RgR^{-1}.
\]
Thus the improper automorphism missing at $X_{\mathbb{R}}$ can be implemented at $X_{\mathbb{C}}$, provided conjugate-linear fibre maps are admitted as morphisms of complex structured fibres.

This does not produce a morphism $X_{\mathbb{R}}\to X_{\mathbb{R}}$. The real object's endomorphism Hom-set is unchanged. The complex presentation instead gives a second source object whose PFB image is isomorphic to that of $X_{\mathbb{R}}$. Take both objects in the vacuum case, with empty matter lists (Section~\ref{sec:setup} allows this). Their stabilisers consist of the same matrices: $\mathsf{Aut}(\mathbb{R}^{2m},g,\mathrm{vol})$ and $\mathsf{Aut}(\mathbb{C}^{2m},b,h,i\epsilon)$ are the group of real orthogonal matrices of determinant one, viewed inside $GL(2m,\mathbb{R})$ and $GL(2m,\mathbb{C})$ respectively. Let $\eta_{\mathrm{can}}$ be the canonical isomorphism sending each matrix to itself, and define the corresponding map of trivial bundles by
\[
\Phi(x,g)=(x,\eta_{\mathrm{can}}(g)).
\]
This map covers the identity on $M$, preserves the flat connection, and is $\eta_{\mathrm{can}}$-equivariant. Hence $\mathcal{F}(X_{\mathbb{R}})$ and $\mathcal{F}(X_{\mathbb{C}})$ are isomorphic. The PFB side does not record whether $SO(2m)$ arose from the real oriented Euclidean fibre or from the complex tensor data $(b,h,i\epsilon)$.

With defining matter included, the isomorphism between the two presentations requires an additional choice in the tensorial grammar. If $\mathcal{F}(X_{\mathbb{C}})$ outputs the full complex defining fibre $\mathbb{C}^{2m}$, it is not automatically isomorphic in $\mathbf{PFB}_{\mathrm{mat}}$ to the real defining fibre $\mathbb{R}^{2m}$. The isomorphism exists if the tensorial grammar for $X_{\mathbb{C}}$ cuts out the real fixed locus of the structure-built conjugation
\[
\sigma=h^{-1}\circ b.
\]
Then $\mathrm{Fix}(\sigma)\cong\mathbb{R}^{2m}$ carries the real defining representation, and the PFB-side isomorphism has
\[
\Phi=\mathrm{id}_{M\times SO(2m)},\qquad \eta=\eta_{\mathrm{can}},\qquad
u:\mathbb{R}^{2m}\to\mathrm{Fix}(\sigma),\quad u(x)=x
\]
in the chosen real basis.

The distinctions are summarised in Table~\ref{tab:presentation-morphisms}.

\begin{center}
\renewcommand{\arraystretch}{1.35}
\begin{tabular}{>{\raggedright\arraybackslash}p{0.31\textwidth}|>{\raggedright\arraybackslash}p{0.43\textwidth}|>{\raggedright\arraybackslash}p{0.16\textwidth}}
\textbf{PFB isomorphism} & \textbf{Formula} & \textbf{VB preimage?}\\
\hline
$\mathcal{F}(X_{\mathbb{R}})\to\mathcal{F}(X_{\mathbb{R}})$ & $\Phi(x,g)=(x,RgR^{-1})$, $\eta=\operatorname{Ad}_R$, $u=R$ & No.\\
$\mathcal{F}(X_{\mathbb{R}})\to\mathcal{F}(X_{\mathbb{C}})$, vacuum & $\Phi(x,g)=(x,\eta_{\mathrm{can}}(g))$, $\eta=\eta_{\mathrm{can}}$ & No.\\
$\mathcal{F}(X_{\mathbb{R}})\to\mathcal{F}(X_{\mathbb{C}})$, real-form matter & $\Phi(x,g)=(x,\eta_{\mathrm{can}}(g))$, $\eta=\eta_{\mathrm{can}}$, $u:\mathbb{R}^{2m}\cong\mathrm{Fix}(\sigma)$ & No.\\
$\mathcal{F}(X_{\mathbb{C}})\to\mathcal{F}(X_{\mathbb{C}})$ & $\Phi(x,g)=(x,RgR^{-1})$, $\eta=\operatorname{Ad}_R$, $u=c\circ R$ & Yes, if semilinear VB morphisms are allowed.\\
\end{tabular}
\captionof{table}{Four PFB-side isomorphisms involving the real and complexified presentations. The map $c\circ R$ implements the improper automorphism at $X_{\mathbb{C}}$; it is not the isomorphism $\mathcal{F}(X_{\mathbb{R}})\to\mathcal{F}(X_{\mathbb{C}})$ between the two presentations.}\label{tab:presentation-morphisms}
\end{center}

The PFB isomorphism between the real and complexified presentations is therefore not $c\circ R$. It is $(\Phi,\eta_{\mathrm{can}})$ as defined above, together with a matter intertwiner only if the complexified presentation outputs the structure-built real fixed locus. The map $c\circ R$ instead realises the improper automorphism at $X_{\mathbb{C}}$.

The complexified presentation therefore does not restore fullness of the original functor. Once $X_{\mathbb{C}}$ is admitted, the improper automorphism has a source preimage at that object, but the source then contains two non-isomorphic VB objects whose PFB images can be isomorphic.\footnote{A simpler presentation-multiplicity case occurs already at $m=1$, on the classical menu: the real oriented plane $(\mathbb{R}^2,g,\mathrm{vol})$ and the Hermitian line $(\mathbb{C},\langle\cdot,\cdot\rangle)$ present isomorphic vacuum PFB theories, since $SO(2)\cong U(1)$, but they are not isomorphic as VB-generating objects. This case is not used in Proposition~\ref{prop:fullness}, which assumes $m\geq2$ because complex conjugation at the Hermitian-line presentation implements the inversion automorphism.
} Non-fullness then appears in the comparison
\[
\mathrm{Hom}_{\mathbf{VB}_{\mathrm{gen}}}(X_{\mathbb{R}},X_{\mathbb{C}})
\longrightarrow
\mathrm{Hom}_{\mathbf{PFB}_{\mathrm{mat}}}\bigl(\mathcal{F}(X_{\mathbb{R}}),\mathcal{F}(X_{\mathbb{C}})\bigr).
\]
In the vacuum case the source Hom-set is empty: $\dim_{\mathbb{R}}\mathbb{R}^{2m}=2m$ while $\dim_{\mathbb{R}}\mathbb{C}^{2m}=4m$, so no invertible fibrewise map between the two objects exists, of any type. The target Hom-set, by contrast, contains the canonical isomorphism. A full functor cannot map an empty Hom-set onto a non-empty one. Equivalently, a full and faithful functor is injective on isomorphism classes. Given an isomorphism $\mathcal{F}(X)\to\mathcal{F}(X')$, fullness supplies source morphisms mapping to it and to its inverse; faithfulness then shows that their composites are the identity morphisms. Hence two non-isomorphic presentations with isomorphic PFB images already preclude equivalence. With defining matter included, the same conclusion holds under the real-fixed-locus convention just described. Complexification changes the Hom-set in which non-fullness appears; it does not restore fullness.

\paragraph{The order-two bound.} At any fixed presentation, structure-compatible semilinear maps can realise at most one nontrivial outer class. Let $V$ be a fibre over $k\in\{\mathbb{R},\mathbb{C}\}$ carrying tensor data $T$, and let $u:V\to V$ be invertible, $\sigma$-semilinear---$u(\lambda x)=\sigma(\lambda)u(x)$ for a continuous field automorphism $\sigma$ of $k$---and structure-compatible in the semilinear sense used above. Each such $u$ induces the automorphism $\eta_u(g)=ugu^{-1}$ of $G=\mathsf{Aut}(V,T)$. If $\sigma=\mathrm{id}$, then $u\in G$ by \eqref{eq:menu}, and $\eta_u$ is inner. If $k=\mathbb{C}$ and $\sigma\neq\mathrm{id}$, then $\sigma$ is complex conjugation. Hence $\sigma^{2}=\mathrm{id}$, the composite $u^{2}$ is linear and structure-compatible, and $u^{2}\in G$. In the outer automorphism group $\mathrm{Out}(G)=\mathrm{Aut}(G)/\mathrm{Inn}(G)$,
\[
[\eta_u]^{2}=[\eta_{u^{2}}]=[\mathrm{id}].
\]
Moreover, if $u$ and $v$ are two conjugate-linear structure-compatible maps at the same presentation, then $v^{-1}u$ is linear and structure-compatible, hence belongs to $G$. They therefore induce the same outer class. For $k=\mathbb{R}$ there is no nontrivial $\sigma$ (Section~\ref{sec:functor}). Thus, for the scalar fields admitted in this paper, the outer classes realised by structure-compatible semilinear maps at a fixed presentation form a group of order at most two. No semilinear enlargement can realise an outer automorphism of order three.

The triality automorphism of $\mathrm{Spin}(8)$ has order three and cyclically permutes its three eight-dimensional irreducible representations. Once the menu is enlarged to present $\mathrm{Spin}(8)$, the trivial flat vacuum object has this target automorphism, but no semilinear source morphism can induce it.

\end{document}